\newcommand{\var}[1]{\# 1}
\newcommand{\LONGVERSION}[1]{}
\newdimen\zzlistingsize
\newdimen\zzlistingsizedefault
\global\def\InsideComment{0}
\newcommand{\Lstbasicstyle}{\fontsize{\zzlistingsize}{1.05\zzlistingsize}\ttfamily}
\newcommand{\keywordFmt}{\fontsize{0.9\zzlistingsize}{1.0\zzlistingsize}\bf}
\newcommand{\smartkeywordFmt}{\if0\InsideComment\keywordFmt\fi}
\newcommand{\commentFmt}{\def\InsideComment{1}\fontsize{0.95\zzlistingsize}{1.0\zzlistingsize}\rmfamily\slshape}
\newlength{\zzlstwidth}
\newcommand{\setlistingsize}[1]{\zzlistingsize=#1%
\settowidth{\zzlstwidth}{{\Lstbasicstyle~}}}
\newcommand{\der}{\,\vdash}
\def\lv{\mathopen{{[\kern-0.14em[}}}    % opening [[ value delimiter
\def\rv{\mathclose{{]\kern-0.14em]}}}   % closing ]] value delimiter
\newcommand{\dens}[1]{\mathopen{[\kern-0.3ex[}#1\mathclose{]\kern-0.3ex]}}
\newcommand{\denk}[2]{\mathopen{\{\kern-0.3ex|}#1\mathclose{|\kern-0.3ex\}}_{#2}}
\def\lo{\mathopen{{\lceil\kern-0.25em\lceil}}}    
\def\ro{\mathclose{{\rfloor\kern-0.25em\rfloor}}}
\def\ltox#1{\buildrel\raise1pt\hbox{$\scriptstyle#1$}\over\longrightarrow}
\def\tocolow{\buildrel\raise-5pt\hbox{$\scriptscriptstyle+$}\over\rightarrow}
\newcommand{\mgoal}[1][]{\mbox{goal\ifthenelse{\equal{#1}{}}{}{~#1}}}
\newcommand{\fun}[1]{\Pi#1.\,}
\newcommand{\lam}[1]{\lambda#1.\,}
 \newcommand{\app}[2]{#1 \; #2}
\newcommand{\EL}{\mathcal{E}\kern-0.2ex\ell}
\newcommand{\x}{\mathsf{x}}
\newcommand{\xdel}[1][]{\ifthenelse{\equal{#1}{}}{\x_\Delta}{\x_{\Delta+#1}}}
\newcommand{\VDash}{\mathrel{\mathord{|}\kern-0.15ex\mathord{\models}}}
\newcommand{\bnfas}{\coloncolonequals}
\newcommand{\bnfalt}{\mid}
\newenvironment{rules}{\begin{displaymath}\begin{array}{c}}{\end{array}\end{displaymath}}
\newcommand{\LONGVERSION}[1]{}
\title{Multi-level Contextual Type Theory}
\author{Mathieu Boespflug \quad\quad Brigitte Pientka
\institute{Mcgill University\\ Montreal, Canada}
\email{\{mboes,bpientka\}@cs.mcgill.ca}
}
\begin{document}
\maketitle

\begin{abstract}
  Contextual type theory distinguishes between bound variables and meta-variables to
write potentially incomplete terms in the presence of binders. It has
found good use as a framework for concise explanations of higher-order
unification, characterize holes in proofs, and in developing a
foundation for programming with higher-order abstract syntax, as
embodied by the programming and reasoning environment Beluga. However,
to reason about these applications, we need to introduce
meta$^2$-variables to characterize the dependency on meta-variables
and bound variables. In other words, we must go beyond a two-level
system granting only bound variables and meta-variables.

In this paper we generalize contextual type theory to $n$ levels for
arbitrary $n$, so as to obtain a formal system offering bound
variables, meta-variables and so on all the way to meta$^n$-variables.
We obtain a uniform account by collapsing all these different kinds of
variables into a single notion of variabe indexed by some level $k$.
We give a decidable bi-directional type system which characterizes
$\beta\eta$-normal forms together with a generalized substitution
operation.

\end{abstract}

\section{Introduction}

A core problem when describing computations and proofs is the need to model
unknown entities. The standard approach is to introduce meta-variables that
one can use in place of concrete evidence that might not yet be
available. Consider for
example the development of the proof for $\forall x.\exists y. P(x, x) \wedge Q
(x, x) \supset Q(y,x) \wedge P(x,y)$ in a proof assistant. Working from the goal
formula, we first introduce a parameter $a$ and subsequently introduce a
meta-variable $Y$ for $y$ which may depend on $a$. We can describe the
intermediate subgoal we must now solve as: $P(a,a) \wedge Q(a,a) \supset Q(Y a,
a) \wedge P(Y a, a)$. At a later point in the proof, we may realize
through (higher-order) unification that $a$ is a good instantiation
for $Y$ giving us an trivially provable goal. The question we address in
this paper is how to describe formally the incomplete proof state we are in prior to
finding instantiations for $Y$. Clearly, the missing proof term we want to
construct depends on the meta-variable $Y$ and the parameter $a$ bound in
the context.
We hence need to introduce meta$^2$-variables to describe it.

A similar situation arises in the Beluga programming and reasoning
environment \cite{Pientka:POPL08,Pientka:IJCAR10}. Recursive programs in Beluga
analyze and manipulate meta-objects of type $A[\Psi]$, i.e. objects which
have type $A$ in a bound variable context $\Psi$. For example, 
\lstinline![x:i] allI \y. andI (F x y) (F y x)! describes the derivation of the
formula $\forall y.P(y,x) \wedge P(x,y)$ where \lstinline!F! itself
stands in lieu of a description of the
derivation which ends in $P(y,x)$ in the context
\lstinline![x:i,y:i]!. Note that the meta-variable \lstinline!F! is bound: if
we pattern match on the LF object, then \lstinline!F! is introduced and bound in
the branch or it is bound explicitely at the outside by an abstraction. We hence
have two different kinds of bound variables in the LF object. In Beluga, we can
write underscores anywhere in an LF object and let  type-reconstruction find the
correct instantiation. For example, to describe an incomplete derivation where
we omit the second argument to \lstinline!andI!, we may write 
\lstinline![x:i] allI \y. andI (F x y) _ !. During type reconstruction, the
underscore will be replaced by a meta$^2$-variable to express the fact that we
may use the meta-variable \lstinline!F! or the bound variables \lstinline!x! and
\lstinline!y!. 

Contextual type theory \cite{Nanevski:ICML05} provides both bound
variables and meta-variables, complete with a logical foundation for reasoning
about them. Up to now it has been used to explain higher-order
unification \cite{PientkaPfenning:CADE03,Abel:TLCA11}, characterize
concisely holes in proofs, and develop a foundation for programming with
higher-order abstract syntax as found in the Beluga programming and reasoning
environment \cite{Pientka:POPL08,Pientka:IJCAR10}.
This paper generalizes and extends contextual type theory
\cite{Nanevski:ICML05} to an arbitrary number of levels of
variables. Bound variables are of level $0$, meta-variables are of level $1$,
meta$^2$-variables are of level $2$, and so on and so forth. This
leaves us with a uniform treatment
of contexts, variables and their associated substitution operations.
Unlike earlier work sketched by Pfenning \cite{Pfenning:TLCA07} for
the simply typed case, we enforce that the
context is ordered, i.e.\ if $n > m$, then variables of level $n$ occur to the 
left of variables of level $m$. This will naturally enforce the correct
dependency: variables of the higher level $n$ cannot depend on the variables of
lower level $m$. We give a bi-directional type system to characterize
$\beta$-$\eta$-long normal forms and generalize the hereditary substitution
operation to variables of arbitrary level. We prove the hereditary substitution to be
terminating, prove that typing preserves the well-formedness of ordered
contexts, and show bi-directional typing to be decidable for the multi-level
system. 

This work is one step of the way towards streamlining and simplifying
the implementation of Beluga, where we currently distinguish between
bound variables, meta-variables, and meta$^2$-variables. But more
generally, this work can be used to formalize incomplete proofs that
manipulate open proof objects containing meta-variables. This is
important to scale tactic languages % based on contextual type theory
such as VeriML \cite{Stampoulis:ICFP10} where we manipulate meta-objects
that may contain bound variables, or to reason about the
tactics themselves. We envision down the line a multi-level Beluga,
which would allow us to reason about and manipulate Beluga programs
within Beluga itself. This will provide a uniform framework where the
proofs, the development of proofs using tactics, and the reasoning
about tactics all share a common basis and supporting implementation.

%%% Local Variables: 
%%% mode: latex
%%% TeX-master: "paper"
%%% End: 

\section{Language definition}
\subsection{Syntax}
Contextual type theory was introduced by Nanevski et al
\cite{Nanevski:ICML05} and extended the logical framework LF
\cite{Harper93jacm} with first-class meta-variables. Our work is a
natural continuation of this work generalizing contextual types to
multiple levels. Following Watkins et al \cite{Watkins02tr}, the
syntax
% given in Figure~\ref{fig:syntax} 
is limited to expressing terms in $\beta$-normal forms, which are
sufficient for encoding the types and expressions of some logic
or programming language as well as judgements and derivations
pertaining to those types and expressions. We leave the development of
a non-canonical version to future work. While the grammar below only
enforces that objects are $\beta$-normal, the typing rules will also
ensure objects are moreover in $\eta$-long form.

\[
\begin{array}{lrcl}
  \mbox{Sorts} & s & \bnfas & \lftype \bnfalt \lfkind
  \\
  \mbox{Atomic Types/Kinds} & P, Q & \bnfas & s \bnfalt \const a
  \bnfalt \app{P}{(\hat\Gamma.N)}
  \\
  \mbox{Normal Types/Kinds} & A,B,K & \bnfas & P \bnfalt \fun{x^n{:}A[\Psi^n]}B
  \\
  \mbox{Atomic Terms} & R & \bnfas & x^n[\sigma] \bnfalt \const c
  \bnfalt \app R{(\hat\Gamma.N)}
  \\
  \mbox{Normal Terms} & M,N & \bnfas & R \bnfalt \lam{x^n}{M}
  \\
  \mbox{Substitutions} & \sigma, \tau & \bnfas & \edot \bnfalt \sigma, \hat\Gamma^n.M \bnfalt \sigma, \ctxvarex{x^n}
  \\
  \mbox{Contexts} & \Psi, \Phi, \Gamma & \bnfas & \edot \bnfalt \Psi, x^n{:}A[\Phi^n]
  \\
  \mbox{Signature} & \Sigma & \bnfas & \edot \bnfalt \Sigma, \const a{:}K \bnfalt \Sigma, \const c{:}A
  % \\[1ex]
\end{array}
\]
%  \caption{Syntax of multi-level contextual terms.}
%  \label{fig:syntax}
%\end{figure}

Normal objects may contain variables $x^n$ which are bound by
$\lambda$-abstraction or declared in a context $\Psi$. Variables are
associated with a level $n$. The level $n$ of a context $\Psi$ is an
upper bound on $\{ k+1 | x^k \in \dom(\Psi) \}$. In other words, we
write $\Psi^n$ when we know that all variables in $\Psi$ are at levels
strictly smaller than $n$. Unlike the superscript on variables, the
superscript on contexts is purely a mnemonic convenience as this
annotation can normally be inferred from information elsewhere wherever
relevant, just as lambda-abstractions are not annotated with the
domain type because that information is usually already available
somewhere else.

A variable $x^n$ has type $A[\Psi^n]$, i.e. it has type $A$ in the context
$\Psi$ of variables at levels lower than $n$. To put it differently, $x^n$ may refer to the (local) variables in
$\Psi^n$ and may also contain (global) variables of a higher level. If $n=0$, we
recover our ordinary bound variables of type $A$. The context will be dropped,
because there is no context of level $0$.  Yet, $A$ may refer to meta-variables
or more generally, to variables at a higher level. Similarly, we can recover
meta-variables which are of level $1$ and have type $A[\Psi^1]$. The context
$\Psi^1$ contains only variables of level $0$, i.e. ordinary bound
variables. Hence, locally meta-variables depend on ordinary bound variables, but
they may also contain (global) variables of higher level (for example,
meta$^2$-variables).
A variable $x^n$ of type $A[\Psi^n]$ stands for an object $\Psihat^n. M$ where
$\Psihat$ lists the bound variables that may occur in $M$ and again, the level $n$
indicates that it may contain locally bound variables only up to level $n$. This
is important information when the need arises to rename the locally
bound variables occurring in $M$, to avoid captures for instance.

As we navigate under binders, it may be necessary to substitute the
bound variable for another one or for a term. Such substitutions get
``stuck'' at the level of meta-variables because there is no term to
substitute in until the meta-variable is instantiated. In the core
syntax we impose the
invariant that all meta-variables standing for a term in a context
$\Psi$ be associated with a (simultaneous) substitution $\sigma$
% , if only the identity substitution, 
% BP: I don't know what you mean by identity substitution; sigma can be more
% than the identity.
such that the domain of $\sigma$
matches that of $\Psi$. As such, given $\Psi$, it is not necessary to
make the domain of substitutions explicit, as that information would
be redundant. Intuitively, the $i$-th element in $\sigma$ corresponds to
the $i$-th assumption in $\Psi$. A postponed substitution $\sigma$ is
applied as soon as we know what $x^n$ stands for and applying
$\sigma$ with domain $\Psi$ to a term $M$ is written
$[\sigma]_{\Psi}M$.

A substitution maps (canonical) terms for variables. But as we push
this substitution under binders, the size of the context grows and so
must that of the substitution. A term of the form
$(\lam{x}{M})[\sigma]$ can be rewritten to $(\lam{x}{M[\sigma']})$,
where $\sigma'$ extends $\sigma$ by mapping the variable $x$ to
itself. However, recall that $x$ by itself is not a meaningful term in
our grammar --- all variables are systematically paired with a
simultaneous substitution.
%  closing over all the free variables in any
% term that $x$ might be instantiated with.
Without knowing the type for $x$, we cannot
infer the appropriate identity substitution which would allow us to
replace $x$ by $x[\id]$. Even if $x[\id]$ was made a valid syntactic
object in our grammar, it is not guaranteed that $x[\id]$ is a
canonical form at higher type (canonical forms are $\eta$-long).
Moreover, the
property that a given term $M$ inhabits a certain type is an
extrinsic property of $M$ and types should play no role when
propagating substitutions. Since the type of a term and its free
variables might not be known in general, it is thus not always
possible to put $x$ in $\beta\eta$-long normal form. We therefore
allow extending substitutions with renamings of variables, such as in
$\sigma' = \sigma, \ctxvarex x$, which maps $x$ to itself.

Applications resemble the way substitutions are built. Using the
typing rules as a guide, notice that being able to apply some atomic
term $R$ to some other term must mean the type of $R$ denotes a
function whose type must be of the shape $\fun{x^n{:}A[\Psi^n]}B$. The
function $R$ then, expects an argument of type $A[\Psi^n]$, which can
therefore only be of the shape $\Psihat^n.M$. It would make little
sense to simply write $M$ here, since $M$ may contain ``free''
variables from $\Psihat^n$. Recall that all occurrences of the
variables $x^n$ in $M$ are associated with a postponed substitution
$\sigma$ which will provide instantiations for the variables in
$\Psi^n$. To further bring home the connection to substitutions,
consider $\beta$-reduction. Eliminating a redex $\app {(\lam{x^n}{N})}
{(\Psihat^n.M)}$ means substituting $\Psihat^n.M$ for $x^n$ in $N$,
i.e. $[\Psihat^n.M/x^n]N$.

\subsection{Context operations}

Before moving to the typing rules, we explain the two
necessary context manipulating operations: merging and chopping. When
checking the domain of a dependent function $\fun{x^n{:}A[\Phi^n]}B$ in
context $\Psi$, it will be necessary to drop some of the assumptions
in $\Psi$ and extend $\Psi$ with $\Phi^n$. To chop off all variables below level
$n$ from the context $\Psi$, we write $\ctxrestrict{\Psi}n$. To merge two
contexts $\Psi$ and $\Phi$ we write $\ctxinsert{\Psi}{\Phi}$.

% performed respectively by
%chopping $\ctxrestrict{\edot}\edot$ and merging $\edot\ctxmerge\edot$.

As mentioned earlier, contexts must be sorted according to the level
of assumptions $x^n{:}A[\Psi^n]$. One should conceptualize this
ordered context as a stack of subcontexts, one for each level of
variables. Let $\Psi(k)$ be the subcontext of $\Psi^n$ with only
assumptions of level $k$. Then, $\Psi^n = \Psi(n-1), \Psi(n-2),
\ldots, \Psi(1)$. We opt here for a flattened presentation of this
stack of contexts in order to simplify merging and chopping of stacks.

However, keeping the context sorted comes at a cost: inserting new
assumptions $x^k{:}A[\Phi^k]$ must respect the invariant that contexts
are always sorted. The flipside is that guaranteeing that merging two
contexts respects well-formedness is much easier and chopping contexts
is more efficient. With merging defined, insertion of a new assumption
into a context is a special case, so we dispense with defining a
separate operation. Merging and chopping contexts are defined
inductively as follows:

%\begin{figure}[htb]
\[
\begin{array}{lcll}
\multicolumn{3}{l}{\mbox{Merging contexts:} \ctxinsert{\Psi}{\Phi} = \Gamma }\\
    \ctxinsert{\edot}{\Phi} &= &\Phi \\
    \ctxinsert{\Psi}{\edot} &= &\Psi \\
    \ctxinsert{\Psi, x^n {:} A[\Gamma^n]}{\Phi, y^k {:} B[{\Gamma'}^k]} &= &
    (\ctxinsert{\Psi, x^n {:} A[\Gamma^n]}{\Phi}), y^k {:} B[{\Gamma'}^k]
    &\text{if $k  \leq n$} % BP: changed to \leq; was < 
    \\
    \ctxinsert{\Psi, x^n {:} A[\Gamma^n]}{\Phi, y^k {:} B[{\Gamma'}^k]} &=&
    (\ctxinsert{\Psi}{\Phi, y^k {:} B[{\Gamma'}^k]}), x^n {:} A[{\Gamma}^n]
    &\text{otherwise}
\\[1em]
\multicolumn{3}{l}{\mbox{Chopping contexts:} \quad\ctxrestrict{\Psi}{n}~ =~  \Phi
%  \quad\mbox{where $\Phi, \Phi'  = \Psi$}
}\\
    \ctxrestrict{\cdot}n &= & \cdot \\ \relax
    \ctxrestrict{(\Psi, x^k {:} A[\Phi^k])}n &= & \ctxrestrict{\Psi}n
    &\text{if $k < n$}
    \\
    \ctxrestrict{(\Psi, x^k {:} A[\Phi^k])}n &= & \Psi, x^k {:} A[\Phi^k]
    &\text{otherwise}
  \end{array}
\]
%  \label{fig:merge-cut}
%  \caption{Merging and chopping}
%\end{figure}
%
%In Section Context merging and context chopping preserve well-formedness of
% contexts.

Merging of two independent contexts is akin to the merge step of the
mergesort algorithm and therefore inherits many of its properties. In
particular, the merge of two sorted independent contexts is again a
sorted context. It is also stable, in the sense that the relative
positions of any two assumptions in $\Psi^n$ or in $\Phi^k$ is
preserved in $\ctxinsert{\Psi^n}{\Phi^k}$.

The chopping operation allows us to drop all variable assumptions below a given
index from a context. If $k \leq n$, then $\ctxrestrict{\Psi^k}{n} = \cdot$.
Similar to the chopping and merging operation on contexts, we will need chopping
and merging on the level of simultanous substitutions, written
$\ctxrestrict{\sigma}{\Gamma^n}$ and $\ctxinsert{\sigma}{\rho}$ respectively. These operations will be defined in Section \ref{sec:sub} on page \pageref{page:chopsub}.

\subsection{Typing rules}

We present in this section a bi-directional type system, capable of
checking normal terms (resp. normal types) against a type (resp.
sorts) and synthesizing types (resp. sorts) for atomic entities. The
rules are given in Figure~\ref{fig:typing-rules}. When reading the
rules bottom-up, assumptions are accumulated into the context $\Psi$
at the left of the turnstile as we descend into multi-level objects,
but it is sometimes necessary to restrict it using the previously
defined operations. All typing judgments have access to a well-typed
signature $\Sigma$ where we store constants together with their types
and kinds. However, signatures declare global constants and never
change in the course of a typing derivation. Therefore the
parameterization of the typing rules by the signature $\Sigma$ for
constants is kept implicit.

\begin{center}
\begin{tabular}{ll}
$\Psi \der M \chk A$ & Normal term $M$ checks against type $A$ \\
$\Psi \der R \syn A$ & Neutral term $R$ synthesizes type $A$ \\
$\Psi \der \sigma \chk \Phi^n$ & Substitution $\sigma$ has domain $\Phi^n$ and range $\Psi$. \\[1em]
\end{tabular}
\end{center}

The bi-directional rules can be understood as determining two mutually
defined algorithms for inferring the type of an object and checking an
object against a type. We always assume that $\Psi$ and the subject
($M$, $R$, or $\sigma$) are given, and that the contexts $\Psi$
contains only canonical types and is well-formed. For checking $M \chk
A$ we also assume $A$ is given and canonical, and similarly for
checking $\sigma \chk \Phi$ we assume $\Phi$ is given and is
well-formed. For synthesis $R \syn A$ we assume $R$ is given and we
generate a canonical $A$. Similarly, at the level of types and contexts we have 
\begin{center}
\begin{tabular}{ll}
$\Psi    \der A \chk s$ & Type/Kind $A$ is well-formed \\
$\Psi    \der A \syn K$ & Type $A$ synthesizes kind $K$\\
$\Psi    \der \Phi \jctx$ & Context $\Phi$ is well-formed in the context $\Psi$
\end{tabular}
\end{center}
with corresponding assumptions on the constituents.

As in Pure Type Systems, a type is well formed if its type is a sort.
Whereas signatures might contain term-level and type-level constant declarations,
we only allow declarations of sort $\lftype$ in contexts since abstractions
and dependent function types may only abstract over terms, not types. 

Checking that types are well-kinded is bi-directional. To check that $\fun{x^n{:}A[\Phi^n]}B$ is a well-formed type in the context $\Psi$, we
check first that the context $\Phi^n$ is well-formed in $\Psi$. We note that the
assumptions in $\Phi^n$ should only have access to assumptions
greater than or equal
to $n$ and checking that $\Phi^n$ is well-formed in the context $\Psi$ will
amount to checking that $\Phi^n$ only depends on assumptions $\ctxrestrict{\Psi}{n}$.
% , i.e. the variables of level $m$ where $n \leq m$.
%To emphasize that all assumptions below
%level $n$ in $\Psi$ are irrelevant, we drop them\footnote{This is not strictly necessary
%here, since our rule of context well-formedness will appropriately restrict
%$\Psi$. However, we think it is clearer and more efficient to drop the
%assumptions.}.
Next, we verify that $A$ is well-kinded. Because of the dependency of
types on terms, $\Phi^n$ scopes over the type $A$. Consider for instance
\[
(\hat\Gamma.\mathsf{cons}\;n\;x\;xs) : (\mathsf{vec}\;n)[\Gamma]
\]
where $\Gamma = n{:}\mathsf{nat},x{:}\mathsf{bool},xs{:}\mathsf{vec}\;n$. The type of this
instantiation for a meta-variable depends on the variable $n$ bound in
$\Gamma$. $A$ may refer to the variables in $\Phi^n$, but also to
variables $m$ where $m \geq n$ from $\Psi$. Hence, we drop from $\Psi$
all assumptions below $n$ and merge the resulting context with
$\Phi^n$. Finally, we check that $B$ is well-kinded in the context
$\Psi$ extended with the assumption $x^n{:}A[\Phi^n]$. We rely on the
previously defined merging operation on contexts, to insert the
assumption $x^n{:}A[\Phi^n]$ at the appropriate position in $\Psi$.

\begin{figure}
\centering
  \begin{rules}
    \mbox{\bf Atomic Types/Kinds}\quad\fbox{$\Psi \der P \syn K$}
    \hfill\\[0.5em]
%    \quad
    \infer{\strut{}\Psi \der \const a \syn K}
    {\Sigma(\const a) = K}
    \quad
    \infer{\strut{}\Psi \der \app P{(\hat\Phi^n.N)} \syn [\Phihat^n.N/x^n]_{A[\Phi^n]}K}
    {\Psi \der P \syn \fun{x^n{:}A[\Phi^n]}K & \ctxinsert{\ctxrestrict\Psi{n}}{\Phi^n} \der N \chk A}
    \\[0.5em]
    \mbox{\bf Normal Types/Kinds}\quad\fbox{$\Psi \der A \chk s$}
    \hfill\\[0.1em]
    \infer{\strut{}\Psi \der \lftype \chk \lfkind}{}
    \quad\quad\quad
    \infer{\strut{}\Psi \der P \chk \lftype}
    {\Psi \der P \syn \lftype}
    \\[0.7em]
    \infer{\strut{}\Psi \der \fun{x^n{:}A[\Phi^n]}B \chk s}
    {\ctxinsert{\ctxrestrict{\Psi}{n}}{\Phi^n} \der A \chk \lftype &
%      \ctxrestrict{\Psi}{n} \der \Phi^n \jctx &
      \Psi \der \Phi^n \jctx &
      \ctxinsert\Psi{x^n{:}A[\Phi^n]} \der B \chk s}
    \\[0.5em]
    \mbox{\bf Atomic Terms}\quad\fbox{$\Psi \der M \syn A$}
    \hfill\\[0.5em]
    \infer{\strut{}\Psi \der x^n[\sigma] \syn [\sigma]_{\Phi^n}A}
    {\Psi(x^n) = A[\Phi^n] & \Psi \der \sigma \chk \Phi^n}
    \quad
    \infer{\strut{}\Psi \der \const c \syn A}
    {\Sigma(\const c) = A}
    \quad
    \infer{\strut{}\Psi \der \app R{(\hat\Phi^n.N)} \syn [\Phihat^n.N/x^n]_{A[\Phi^n]}B}
    {\Psi \der R \syn \fun{x^n{:}A[\Phi^n]}B & \ctxinsert{\ctxrestrict\Psi{n}}{\Phi^n} \der N \chk A}
    \\[0.5em]
    \mbox{\bf Normal Terms}\quad\fbox{$\Psi \der M \chk A$}
    \hfill\\[0.5em]
    \infer{\strut{}\Psi \der R \chk Q}
    {\Psi \der R \syn P & P = Q}
    \quad\quad\quad
    \infer{\strut{}\Psi \der \lam{x^n}M \chk \fun{x^n{:}A[\Phi^n]}B}
    {\ctxinsert\Psi{x^n{:}A[\Phi^n]} \der M \chk B}
    \\[0.5em]
    \mbox{\bf Substitutions}\quad\fbox{$\Psi \der \sigma \chk \Phi^n$}
    \hfill\\[0.1em]
    \infer{\strut{}\Psi \der \edot \chk \edot}{}
    \quad\quad
    \infer[\strut{}\mbox{where}\;\sigma' =
    \ctxinsert{\ctxrestrict{\sigma}{k}}{\id(\Gammahat^k)}]
    {\strut{}\Psi \der \sigma, \hat \Gamma^k.M \chk \Phi^n, x^k{:}A[\Gamma^k]}
    {\Psi \der \sigma \chk \Phi^n & 
     \ctxinsert{\ctxrestrict\Psi{k}}{[\sigma]_{\Phi^n}(\Gamma^k)} \der M \chk
     [\sigma']_{(\ctxinsert{\ctxrestrict{\Phi^n}{k}}{\Gamma^k})} A} 

    \\[0.7em]
    \quad\quad\quad\quad\quad\quad\quad
    \infer{\strut{}\Psi \der \sigma, \ctxvarex y^k \chk \Phi^n, x^k{:}A[\Gamma^k]}
    {% \ctxinsert{\ctxrestrict{\Psi}{k}}{\Gamma^k} \der A \chk \lftype &
      \Psi \der \sigma \chk \Phi^n &
      \Psi(y^k) = [\sigma]_{\Phi^n}(A[\Gamma^k])}
% [\sigma]_{\Phi^n} (A)[[\sigma]_{\Phi^n}\Gamma^k])
    \\[0.5em]
    \mbox{\bf Context well-formedness}\quad\fbox{$\Psi \der \Phi^n \jctx$}
    \hfill\\[0.5em]
    \infer{\strut{} \Psi \der \edot \jctx}{}
    \quad
    \infer[k<n]{\strut{}\Psi \der \Phi^n, x^k{:}A[\Gamma^k] \jctx}
    {\Psi \der \Phi^n \jctx &
       \ctxinsert{\ctxinsert{\ctxrestrict{\Psi}{n}}
                 {\ctxrestrict{\Phi^n}{k}}}
                 {\Gamma^k}  \der A \chk \lftype &
      \ctxinsert{\ctxrestrict{\Psi}{n}}
      {\ctxrestrict{\Phi^n}{k}}\der \Gamma^k \jctx}

  \end{rules}
  \caption{Typing rules for LF with contextual variables and context variables}
  \label{fig:typing-rules}
\end{figure}

To check that atomic types are well-kinded, we synthesize their kind. For type
constants, we simply look up their type in the signature $\Sigma$. The
interesting case is the application rule. To synthesize the kind for $\app P
(\Phihat^n.N)$, we first synthesize the kind for $P$ as $\fun
{x^n{:}A[\Phi^n]}K$. Subsequently, we check that $N$ has type $A$. We again must
be careful regarding the context. First, some renaming may be necessary to bring
the locally bound variables described in $\Phihat^n$ in sync with the
context. Moreover, we observe that all variables below $n$ which occur in $N$
and $A$ refer to binding sites in $\Phi^n$. All variables equal or greater than $n$
which occur in $N$ and $A$ refer to binding sites in $\ctxrestrict{\Psi}{n}$,
i.e. the context $\Psi$ where we drop all assumptions below $n$. Finally, we
must be careful to substitute $\Phihat^n.N$ for $x^n$ in $K$ in the resulting
kind we return. Because our grammar only recognizes $\beta$-normal objects as
syntactically well-formed, we must rely on hereditary substitution to
hereditarily eliminate any redices as we instantiate variables in the
target of the substitution. We annotate here the substitution
with the type of $A[\Phi^n]$. This is only strictly necessary to ensure that
hereditary substitutions terminate. We postpone the definition and discussion on
hereditary substitutions for now and will revisit it in Section~\ref{sec:sub}.

In the lambda-abstraction rule, we check that $\lambda x^n.M$ has type $\fun
{x^n{:}A[\Phi^n]}B$ by inserting the new assumption $x^n{:}A[\Phi^n]$ at the
appropriate position in $\Psi$ and continuing to check that $M$ has type
$B$. Note that, without loss of generality, we implicitly assume here
and everywhere else that $x^n \notin \Psi$. This can always be achieved by $\alpha$-renaming.
When we reach a normal object of atomic type, we synthesize a type $Q$ and
compare $Q$ to the expected type $P$. Comparing two types reduces to checking
structural equality between $Q$ and $P$ modulo renaming, since all types and terms are always in
canonical form. The only minor complication arises when checking that
two substitutions are equivalent. Because we may simply write $x^n$ for a
variable of type $A[\Psihat^n]$ or its expanded form, comparing two
substitutions must take into account $\eta$-contraction.

To synthesize the type of a constant, we simply look up its type in the
signature $\Sigma$. Term-level application $\app R (\Phihat^n.N)$ follows the
same ideas as type-level applications. The most interesting rule is the one for
variables. To synthesize the type of a variable $x^n$ we retrieve its type
$A[\Phi^n]$ from $\Psi$. Next, we check that the substitution $\sigma$ which is
associated with $x^n$ maps variables from $\Phi^n$ to $\Psi$. Finally, we return
the type of $x^n[\sigma]$ which is $[\sigma]_{\Phi^n}A$.

A substitution $\sigma , \hat\Gamma^k.M$
checks against domain $\Phi^n, x^k{:}A[\Gamma^k]$, if $\sigma$ checks against
$\Phi^n$ and in addition $M$ is well-typed. As in the rules for applications, we
must be a little careful about where variables in $M$ are bound. $M$ contains
locally bound variables from $\hat\Gamma^k$ as well as global variables from
$\ctxrestrict{\Psi}{k}$. We again restrict $\Psi$ to only contain variables
above $k$, since all variables below $k$ are bound in $\Gamma$. Next, we inspect
the type dependencies. We note that $\Gamma^k$  is a well-formed context in
$\Phi^n$, although the typing rules will ensure $\Gamma^k$ only accesses
declarations from $\ctxrestrict{\Phi^n}{k}$. Similarly, when applying
$\sigma$ to $\Gamma^k$, we will ensure that $\sigma$ will be appropriately
restricted (see the definition in the appendix) to only substitute for variables
of level $k$ and higher. Therfore, $[\sigma]_{\Phi^n}(\Gamma^k)$ yields a
well-formed context in $\ctxrestrict{\Psi^n}{k}$.  
On the other hand, $A$ is well-typed in the context
$\ctxinsert{\ctxrestrict{\Phi^n}{k}}{\Gamma^k}$, however $\sigma$ has domain
$\Phi^n$. Simply applying $\sigma$ to $A$ would be incorrect; instead, we
restrict $\sigma$ to contain only the mappings for the variables in
$\ctxrestrict{\Phi^n}{k}$ and map all the variables from $\Gamma^k$ to
themselves. This is written as
$[\ctxinsert{\ctxrestrict{\sigma}{k}}{\id(\Gammahat^k)}]_{(\ctxinsert{\ctxrestrict{\Phi^n}{k}}{\Gamma^k})}$.  

Checking the extension $\sigma, \ctxvarex x^k$ of a substitution by a
variable involves looking up the
declared type of $x^k$ in $\Psi$ and compare it to the expected type. Because
the expected type $A[\Gamma^k]$ was well-typed in $\Phi^n$, we must verify that
$\Psi(x^k) = [\sigma]_{\Phi^n}(A[\Gamma^k])$. Note that
$[\sigma]_{\Phi_n}(A[\Gamma^k]) =
([\ctxrestrict{\sigma}{\Gamma^k}]_{(\ctxinsert{\ctxrestrict{\Phi_n}{k}}{\Gamma^k})}A)[[\sigma]_{\Phi_n} \Gamma^k]$.

Finally, we consider the rules that characterize well-formed contexts.
In the typing rules discussed above, we are often given contexts
$\Phi$ that are not closed but rather whose assumptions might depend
on the ambient context $\Psi$. Since $\Psi$ is already assumed
well-formed, we keep it to the left of the turnstile and write
$\Psi \vdash \Phi^n \jctx$ to mean $\Phi$ is a well-formed context at
level $n$ in context $\Psi$. An alternative would have been to have
judgements of the form $\vdash \Gamma^n \jctx$ only and state that $\Phi^n$
in $\Psi$ is well-formed as $\vdash \ctxinsert{\ctxrestrict
  {\Psi}{n}}{\Phi^n}$. A context $\Phi^n, x^k{:}A[\Gamma^k]$ is well-formed if $\Phi^n$ is
well-formed and $A[\Phi^k]$ is well-typed. Again we must be careful
about the dependency structure. The context $\Gamma^k$ can refer to
variables from $\Phi^n$, but only at levels $k \leq n$. Moreover, any
variable $x^m$ where $n \leq m$ is declared in $\Phi$.
% Hence, we can
% assemble the appropriate context in which $\Phi^k$ is well-formed by
% $\ctxinsert{\ctxrestrict{\Phi}{n}}{\ctxrestrict{\Psi^n}{k}}$. We then
% check $A$ to be well-kinded in the appropriate context.

\subsection{Properties}

We begin by proving some properties about contexts and context merging and 
chopping. We first show that we can always increase the upper bound of a context.

\begin{lemma}[Cumulativity]
  If $\Psi \der \Phi^n \jctx$ and $n < k$ then $\Psi \der \Phi^k \jctx$.
\end{lemma}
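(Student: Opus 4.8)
The plan is to proceed by structural induction on the derivation of $\Psi \der \Phi^n \jctx$. There are only two rules to consider. The base case is the rule concluding $\Psi \der \edot \jctx$, and since this rule imposes no constraint on the superscript, it immediately gives $\Psi \der \edot \jctx$ at level $k$ as well. The inductive case is the rule for $\Psi \der \Phi^n, x^j{:}A[\Gamma^j] \jctx$, whose premises are $\Psi \der \Phi^n \jctx$, together with well-kindedness of $A$ and well-formedness of $\Gamma^j$ in the appropriate merged/chopped contexts, under the side condition $j < n$.

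The key observation is that raising the upper bound from $n$ to $k > n$ does not change any of the chopping operations appearing in the premises: since every declared variable in $\Phi^n$ already has level strictly below $n \le k$, we have $\ctxrestrict{\Phi^n}{j} = \ctxrestrict{\Phi^k}{j}$ for the $j$ with $j < n$, and more importantly $\ctxrestrict{\Phi^n}{?}$ is syntactically the same list of declarations as $\ctxrestrict{\Phi^k}{?}$ because the superscript on a context is merely a mnemonic bound, not part of its data — the chopping operation is defined purely in terms of the levels $j$ attached to the individual declarations $x^j{:}A[\Gamma^j]$. Similarly $\ctxrestrict{\Psi}{n}$ and $\ctxrestrict{\Psi}{k}$ may differ, but we only need that $\ctxinsert{\ctxrestrict{\Psi}{k}}{\ctxrestrict{\Phi^k}{j}}$ re-exposes, after merging, exactly the declarations needed to type $A$ and $\Gamma^j$; here I would either (i) strengthen the statement to carry along the observation that the relevant merged context is unchanged because $j < n$, or (ii) invoke a standard weakening lemma for the judgements $\der A \chk \lftype$ and $\der \Gamma^j \jctx$ to pass from the level-$n$ ambient chopping to the level-$k$ one. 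Then the inductive hypothesis applied to the sub-derivation $\Psi \der \Phi^n \jctx$ gives $\Psi \der \Phi^k \jctx$, the side condition $j < n < k$ still holds, and re-applying the context-extension rule yields $\Psi \der \Phi^k, x^j{:}A[\Gamma^j] \jctx$, which is $\Psi \der \Phi^k \jctx$ as desired.

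The main obstacle — really the only subtlety — is handling the ambient context $\Psi$ cleanly: the premises mention $\ctxrestrict{\Psi}{n}$, and after raising to $k$ they would mention $\ctxrestrict{\Psi}{k}$, which genuinely drops more of $\Psi$. But $\ctxrestrict{\Psi}{k}$ is a sub-context of $\ctxrestrict{\Psi}{n}$ consisting only of higher-level declarations, and the declarations it drops all have level strictly between $n$ and $k$, hence strictly above every level occurring in $\Phi$, $\Gamma^j$, or $A$ (the latter by the invariant that a declaration $x^j{:}A[\Gamma^j]$ in a well-formed context only accesses assumptions of level $\ge j$, and here $j < n$). So those dropped declarations play no role, and a routine weakening/strengthening argument closes the gap. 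I would factor this into a small auxiliary weakening lemma (stated once, for all the judgement forms) rather than re-derive it inline, keeping the proof of Cumulativity to the two-line induction sketched above.
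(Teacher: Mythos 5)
Your proof follows exactly the paper's approach --- the paper's entire argument is the single line ``By induction on the derivation of $\Psi\der\Phi^n\jctx$.'' Your version is in fact more careful: you isolate the one genuine subtlety (the premises of the context-extension rule chop the ambient context at the declared level, so $\ctxrestrict{\Psi}{n}$ becomes the smaller $\ctxrestrict{\Psi}{k}$ and a strengthening step is needed), a point the paper's one-line proof silently elides.
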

\LONGVERSION{
\begin{proof}
  By induction on the derivation of $\Psi\der\Phi^n\jctx$.
\end{proof}}

Next, we show that merging produces well-formed contexts, if both contexts are
independent. 

\begin{lemma}[Closure under independent context merging]\mbox{}\\
  If $\cdot \der \Psi^n \jctx$ and $\cdot \der \Phi^k \jctx$ then
  $\cdot \der (\ctxinsert{\Psi^n}{\Phi^k})^{\max(n,k)} \jctx$.
\end{lemma}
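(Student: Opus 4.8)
The plan is to proceed by induction, but the key decision is what to induct on. Since merging is defined by recursion on the structure of the two contexts (peeling off the last declaration of whichever has the larger level), I would induct on the combined size $|\Psi^n| + |\Phi^k|$, or equivalently do a nested induction following the case structure of $\ctxinsert{\cdot}{\cdot}$. The base cases are immediate: if $\Psi^n = \edot$ then $\ctxinsert{\edot}{\Phi^k} = \Phi^k$ and $\max(n,k) = k$ (using $n \le k$ is not needed here, we just read off the definition), and symmetrically if $\Phi^k = \edot$. So the real content is the step case where $\Psi^n = \Psi', x^m{:}A[\Gamma^m]$ and $\Phi^k = \Phi', y^j{:}B[{\Gamma'}^j]$, which splits according to whether $j \le m$ or not — these two subcases are symmetric, so I describe only $j \le m$.

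In the subcase $j \le m$ the definition gives $\ctxinsert{\Psi^n}{\Phi^k} = (\ctxinsert{\Psi^n}{\Phi'}), y^j{:}B[{\Gamma'}^j]$. To apply the well-formedness rule for context extension (the $k<n$ rule in Figure~\ref{fig:typing-rules}) to this, I need three things: (i) $\cdot \der (\ctxinsert{\Psi^n}{\Phi'})^{?} \jctx$ for an appropriate upper bound; (ii) that $B$ checks against $\lftype$ in the merged context restricted appropriately; and (iii) that ${\Gamma'}^j$ is well-formed there. For (i), note $\Phi^k = \Phi', y^j{:}B[{\Gamma'}^j]$ being well-formed gives, by inversion, $\cdot \der {\Phi'}^k \jctx$; the induction hypothesis applied to $\Psi^n$ and ${\Phi'}^k$ yields $\cdot \der (\ctxinsert{\Psi^n}{\Phi'})^{\max(n,k)} \jctx$. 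Since $j < k \le \max(n,k)$, the side condition of the extension rule is satisfied. For (ii) and (iii), I extract from the well-formedness of $\Phi^k$ the facts that $B$ checks against $\lftype$ and ${\Gamma'}^j$ is well-formed, each in the context $\ctxinsert{\ctxrestrict{\cdot}{k}}{\ctxrestrict{{\Phi'}^k}{j}}$ (possibly further extended by ${\Gamma'}^j$), which since $\ctxrestrict{\cdot}{k} = \cdot$ is just $\ctxrestrict{{\Phi'}^k}{j}$. The crux is then a purely context-algebraic identity: $\ctxrestrict{(\ctxinsert{\Psi^n}{\Phi'})}{j} = \ctxrestrict{{\Phi'}^k}{j}$, because every variable of $\Psi^n$ has level $\ge n \ge m \ge j$ — wait, we don't have $n \le j$; rather every variable of $\Psi$ has level $< n$ but could be $\ge j$ or $< j$. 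Let me restate: chopping the merge below $j$ keeps exactly those declarations of $\Psi^n$ and of $\Phi'$ at levels $\ge j$; but since the assumptions of $\Psi^n$ that survive are precisely $\ctxrestrict{\Psi^n}{j}$, and stability of merge preserves their relative order, we get $\ctxrestrict{(\ctxinsert{\Psi^n}{\Phi'})}{j} = \ctxinsert{\ctxrestrict{\Psi^n}{j}}{\ctxrestrict{{\Phi'}^k}{j}}$. So (ii) and (iii) do not hold on the nose — $B$ was well-typed in the smaller context $\ctxrestrict{{\Phi'}^k}{j}$, and I need it well-typed in $\ctxinsert{\ctxrestrict{\Psi^n}{j}}{\ctxrestrict{{\Phi'}^k}{j}}$, i.e.\ in a context with extra higher-level assumptions added. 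This is a weakening step, and I expect it to be the main obstacle: I would need a weakening lemma stating that adding well-formed higher-level declarations preserves all typing and well-formedness judgments. Such a lemma is routine (structural induction on the typing derivation, using that chopping and merging commute appropriately), but it is a genuine auxiliary result that the excerpt has not yet stated, so I would either invoke it explicitly or, if it is proved later in the paper, forward-reference it.

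In summary: the skeleton is a straightforward induction on the sum of context sizes following the recursion pattern of merge; the two step-subcases are mirror images; within each, inversion on the well-formedness of the shorter-level context plus the induction hypothesis give the recursive premise, and the remaining premises follow from a context-weakening lemma together with the stability/commutation properties of merge and chop ($\ctxrestrict{(\ctxinsert{\Psi}{\Phi})}{j} = \ctxinsert{\ctxrestrict{\Psi}{j}}{\ctxrestrict{\Phi}{j}}$). The main obstacle is precisely the need for that weakening lemma, since merging can interleave the higher-level assumptions of $\Psi$ into the middle of $\Phi$, so the contexts in which sub-derivations must be re-checked are strictly larger than those in which they originally held.
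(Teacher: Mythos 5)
Your proposal is correct and is essentially the paper's argument: the paper's own proof is a terse appeal to the sortedness and stability invariants of the merge step plus the level bounds coming from well-formedness, and your explicit induction following the recursion pattern of $\ctxinsert{\cdot}{\cdot}$, with inversion, the chop/merge commutation $\ctxrestrict{(\ctxinsert{\Psi}{\Phi})}{j} = \ctxinsert{\ctxrestrict{\Psi}{j}}{\ctxrestrict{\Phi}{j}}$, and a weakening step, is that same argument carried out in full. The weakening lemma you correctly identify as the crux is not a gap: it is stated in the paper immediately after this lemma (if $\Psi \der J$ then $\ctxinsert{\Psi}{x^n{:}A[\Phi^n]} \der J$), so your forward reference is harmless, and your only slip is cosmetic --- in the base cases $\max(n,k)$ need not equal the level of the surviving context, so one invokes cumulativity rather than ``reading it off the definition.''
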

\LONGVERSION{
\begin{proof}
  Since they are well-formed, assumptions in $\Psi^n,\Phi^k$ are
  well-ordered with respect to variable levels in the domains. This
  lemma follows from the invariants respected by the merge step in the
  merge-sort algorithm. The merge of two sorted lists is a sorted list
  and merging is stable. Moreover, again since $\Psi^n,\Phi^k$
  well-formed, for all $x^m \in \dom(\Psi^k)$ and $y^{m'} \in
  \dom(\Phi^k)$, $m < n$ and $m' < k$. Therefore
  $\ctxinsert{\Psi^n}{\Phi^k}$ is well-formed at level $\max(n,k)$.
\end{proof}
}
More importantly, if we extend a context $\Psi^n$ with a context $\Phi^k$ where
$\Psi^n \der \Phi^k \jctx$, the resulting context $\ctxinsert{\Psi^n}{\Phi^k}$
is well-formed. This lemma is crucial to ensure that we work with well-formed
contexts during typing.  

\begin{lemma}[Well-formed context extension]\mbox{}\\%[Lowering]
  If $\der \Psi^n \jctx$ and $\Psi^n \der \Phi^k \jctx$
  then $\edot \der (\ctxinsert{\Psi^n}{\Phi^k})^{\max(n,k)} \jctx$.
\end{lemma}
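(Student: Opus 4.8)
The plan is to prove this by induction on the derivation of $\Psi^n \der \Phi^k \jctx$, following the structure of the context well-formedness rules in Figure~\ref{fig:typing-rules}. The statement to establish is essentially that once we know $\der \Psi^n \jctx$ (so $\Psi^n$ is closed and well-formed) and $\Psi^n \der \Phi^k \jctx$, the merged stack $\ctxinsert{\Psi^n}{\Phi^k}$ is closed and well-formed at level $\max(n,k)$. The key observation is that the well-formedness rule for $\Phi^n, x^k{:}A[\Gamma^k] \jctx$ already checks $A$ and $\Gamma^k$ against contexts of the form $\ctxinsert{\ctxinsert{\ctxrestrict{\Psi}{n}}{\ctxrestrict{\Phi^n}{k}}}{\Gamma^k}$, and these are exactly the (restrictions of the) merged context we want, so the inductive hypothesis plus some bookkeeping about how chopping interacts with merging should close each case.

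First I would handle the base case $\Phi^k = \edot$: then $\ctxinsert{\Psi^n}{\edot} = \Psi^n$ by definition of merging, and $\der \Psi^n \jctx$ is given; cumulativity (the Cumulativity Lemma) lifts it from level $n$ to $\max(n,k)$ if $n < k$, and otherwise it already holds at level $n = \max(n,k)$. For the inductive step, $\Phi^k = \Phi'^k, x^m{:}A[\Gamma^m]$ with $m < k$. From inversion on $\Psi^n \der \Phi'^k, x^m{:}A[\Gamma^m] \jctx$ we get $\Psi^n \der \Phi'^k \jctx$, together with $\ctxinsert{\ctxinsert{\ctxrestrict{\Psi^n}{k}}{\ctxrestrict{\Phi'^k}{m}}}{\Gamma^m} \der A \chk \lftype$ and $\ctxinsert{\ctxrestrict{\Psi^n}{k}}{\ctxrestrict{\Phi'^k}{m}} \der \Gamma^m \jctx$. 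Applying the induction hypothesis to $\Psi^n \der \Phi'^k \jctx$ gives $\edot \der (\ctxinsert{\Psi^n}{\Phi'^k})^{\max(n,k)} \jctx$. It then remains to observe that $\ctxinsert{\Psi^n}{(\Phi'^k, x^m{:}A[\Gamma^m])} = \ctxinsert{(\ctxinsert{\Psi^n}{\Phi'^k})}{x^m{:}A[\Gamma^m]}$ --- i.e.\ that merging the whole of $\Phi^k$ is the same as merging $\Phi'^k$ first and then inserting the last assumption --- which holds because merging is defined by recursion on the right context and is stable, so the single assumption $x^m{:}A[\Gamma^m]$ finds the same slot whether inserted before or after the other assumptions of $\Phi'^k$. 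Finally, I would verify the two premises needed to re-apply the context-extension rule to the merged context, namely that $A$ is well-kinded and $\Gamma^m$ is well-formed in the appropriate restriction of $\ctxinsert{\Psi^n}{\Phi'^k}$; for this I need the distributivity identity $\ctxrestrict{(\ctxinsert{\Psi^n}{\Phi'^k})}{m} = \ctxinsert{\ctxrestrict{\Psi^n}{m}}{\ctxrestrict{\Phi'^k}{m}}$ (chopping commutes with merging since both operate levelwise), after which the hypotheses extracted by inversion are literally the premises required.

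The main obstacle I anticipate is the bookkeeping around the interaction of chopping and merging at different cut-off levels --- in particular reconciling $\ctxrestrict{\Psi^n}{k}$ (the restriction used inside the hypothesis derivation, cutting at the level $k$ of the outer context $\Phi$) with $\ctxrestrict{(\ctxinsert{\Psi^n}{\Phi'^k})}{m}$ (the restriction the rule demands for the merged context, cutting at the level $m$ of the new assumption). Since $m < k$ and all of $\Phi'^k$'s assumptions sit at levels $< k \le n$, one has $\ctxrestrict{\Psi^n}{k} = \ctxrestrict{\Psi^n}{k}$ unchanged but the contributions of $\Phi'^k$ to the cut at $m$ need to be tracked; I would isolate these as a small auxiliary lemma stating $\ctxrestrict{(\ctxinsert{\Psi}{\Phi})}{j} = \ctxinsert{\ctxrestrict{\Psi}{j}}{\ctxrestrict{\Phi}{j}}$ and $\ctxinsert{\ctxrestrict{(\ctxrestrict{\Psi}{j_1})}{j_2}} = \ctxrestrict{\Psi}{\max(j_1,j_2)}$ for chopping, proved by straightforward induction on context structure, and then the inductive step becomes routine rewriting. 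The Closure under independent context merging lemma, though stated for the case $\Psi, \Phi$ both closed, supplies the stability-and-sortedness reasoning I will reuse here for the dependent case.
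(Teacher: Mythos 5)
Your overall strategy---induct only on the derivation of $\Psi^n \der \Phi^k \jctx$, peel off the last declaration of $\Phi^k$, and reduce to the identity $\ctxinsert{\Psi^n}{(\Phi'^k, x^m{:}A[\Gamma^m])} = \ctxinsert{(\ctxinsert{\Psi^n}{\Phi'^k})}{(x^m{:}A[\Gamma^m])}$---differs from the paper, which runs a \emph{lexicographic} induction on the structure of \emph{both} contexts and case-splits on which context's last declaration sits at the lower level, exactly mirroring the recursion in the definition of $\ctxmerge$. That difference is not cosmetic: it is where your argument has a gap. After the induction hypothesis gives you $\edot \der (\ctxinsert{\Psi^n}{\Phi'^k}) \jctx$, you say you will ``re-apply the context-extension rule to the merged context.'' But the context well-formedness rules only ever \emph{append} a declaration at the right end of a context, whereas $\ctxinsert{(\ctxinsert{\Psi^n}{\Phi'^k})}{(x^m{:}A[\Gamma^m])}$ may place $x^m$ strictly in the \emph{interior} of $\ctxinsert{\Psi^n}{\Phi'^k}$: this happens whenever $\Psi^n$ has trailing declarations at levels below $m$ (e.g.\ $\Psi^n$ ends in $y^0{:}B$ while $m\geq 1$). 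In that situation you cannot invoke the rule directly; you must reopen the derivation of $\edot \der (\ctxinsert{\Psi^n}{\Phi'^k}) \jctx$, slot $x^m$ in at the right place, and re-derive the premises for all the declarations now to its right---which requires a weakening argument (the paper explicitly appeals to weakening here) and its own induction on the tail of $\Psi^n$. That secondary induction is precisely the second component of the paper's lexicographic measure, handled in its subcase where the last declaration of $\Psi$ has the lower level and is peeled off instead of the last declaration of $\Phi$.

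The gap is fixable rather than fatal: you could isolate an auxiliary lemma stating that inserting a single well-typed declaration $x^m{:}A[\Gamma^m]$ into a well-formed closed context preserves well-formedness (proved by induction on the target context, using weakening), note that the paper itself observes insertion is a special case of merging, and then your main induction together with the merge-associativity and chop/merge-distributivity identities would go through. But as written, the step ``re-apply the context-extension rule'' silently assumes $x^m$ lands at the end of the merged context, and the proof does not close without the insertion lemma or the extra induction on $\Psi^n$. Your two bookkeeping identities (chopping distributes over merging; the associativity of inserting the last declaration) do appear to hold for sorted contexts and are the right auxiliary facts to state, so the remaining work is concentrated entirely in that one missing step.
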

\LONGVERSION{
\begin{proof}
By lexicographic induction on the structure of  $\Phi^n$ and $\Psi^k$.
\paragraph{Case 1} $\Phi^k = \cdot$\\
$\der \Psi^n \jctx$ \hfill by assumption \\
$\Psi^n \der \cdot \jctx$ \hfill by assumption \\
$\ctxinsert{\Psi^n}{\cdot} = \Psi^n$ \hfill by definition\\
\textbf{Subcase 1.1:}$k > n$ \\
$\der \Psi^k$ \hfill by cumulativity \\
{\textbf{Subcase 1.2:} $k \leq n$} \\
{$\der \Psi^n \jctx$} \hfill by assumption 

\paragraph{Case 2} $\Phi^k = \Phi'^k, x^l{:}A[\Gamma^l]$\\
{\textbf{Subcase 2.1:} $\Psi^n = \cdot$}\\
$\cdot^n \der \Phi'^k, x^l{:}A[\Gamma^l]\jctx$ \hfill by assumption \\
$\ctxinsert{\cdot^n}{\Phi^k} = \Phi^k$ \hfill by definition \\
{\textbf{Subcase 2.1.1:}$k > n$} \\
{$\der \Phi^k$} \hfill by assumption \\
{\textbf{Subcase 2.1.2:} $k \leq n$} \\
{$\der \Psi^n \jctx$} \hfill by cumulativity 
\\[1em]
{\textbf{Subcase 2.2:} $\Psi^n = \Psi'^n, y^m{:}C[\Gamma'^m]$} \\
$\Psi^n \der \Phi'^k, x^l{:}A[\Gamma^l]\jctx$ \hfill by assumption \\
$\Psi^n \der \Phi'^k \jctx$ \hfill by inversion\\
{\textbf{Subcase 2.2.1:}$m < l $} \\
$\ctxinsert{(\Psi'^n, y^m{:}C[\Gamma'^m])}{ \Phi'^k, x^l{:}A[\Gamma^l]} =
 (\ctxinsert {\Psi'^n}{(\Phi'^k, x^l{:}A[\Gamma^l])}), y^m{:}C[\Gamma'^m]$\hfill by
definition \\
$\der \ctxinsert {\Psi'^n}{(\Phi'^k, x^l{:}A[\Gamma^l)]} \jctx$ \hfill by i.h. \\ 
$\der \Psi'^n, y^m{:}C[\Gamma'^m] \jctx$ \hfill by assumption \\
$\der \Psi'^n \jctx$  \hfill \\
$\Psi'^n \der \Gamma'^m \jctx$ \hfill \\
$\ctxinsert{\ctxrestrict{\Psi'^n }{m}}{\Gamma'^m} \der C \chk \lftype$ \hfill by
inversion \\
$\ctxinsert{\Psi'^n}{\Phi^k} \der \Gamma'^m \jctx$ \hfill by weakening\\
$\ctxinsert{\ctxrestrict{(\ctxinsert{\Psi'^n}{\Phi^k})}{m}}{\Gamma'^m} \der C
\chk \lftype$ \hfill by weakening \\
$\der (\ctxinsert {\Psi'^n}{\Phi^k}), y^m{:}C[\Gamma'^m] \jctx$
\\[0.5em]
{\textbf{Subcase 2.2.2:}$m \geq l $} \\ 
$\ctxinsert{(\Psi'^n, y^m{:}C[\Gamma'^m])}{ \Phi'^k, x^l{:}A[\Gamma^l]} =
 (\ctxinsert {\Psi'^n, , y^m{:}C[\Gamma'^m]}{\Phi'^k}), x^l{:}A[\Gamma^l]$ \hfill by
definition \\
$\der \ctxinsert{\Psi^n}{\Phi'^k} \jctx$ \hfill by i.h. \\
$\der \Phi'^k, x^l{:}A[\Gamma^l] \jctx$ \hfill by assumption \\
$\der \Phi'^k \jctx$  \hfill \\
$\Phi'^k \der \Gamma^l \jctx$ \hfill \\
$\ctxinsert{\ctxrestrict{\Phi'^k }{l}}{\Gamma^l} \der A \chk \lftype$ \hfill by
inversion \\
$\ctxinsert{\Psi^n}{\Phi'^k} \der \Gamma^l \jctx$ \hfill by weakening\\
$\ctxinsert{\ctxrestrict{(\ctxinsert{\Psi^n}{\Phi'^k})}{l}}{\Gamma^l} \der A
\chk \lftype$ \hfill by weakening \\
$\der (\ctxinsert {\Psi'^n}{\Phi'^k}), x^l{:}A[\Gamma^l] \jctx$

  % First, we generalize weakening to
  % \[
  % \Psi_1^{n_3} \ctxmerge \Psi_1^{n_1} \der J
  % \text{ implies }
  % \Psi_1^{n_3} \ctxmerge \Psi_2^{n_2} \ctxmerge \Psi_1^{n_1} \der J
  % \]
  % by induction on $\Psi_1^{n_1}$.

  % We then proceed by an induction on $\Phi$.
  % If $\Phi = \edot$, then $\ctxinsert{\Psi^n}{\edot} = \Psi^n$
  % well-formed at $n$ by assumption, and at any $n' > n$ by
  % cumulativity. If $\Phi=\Phi',x^m{:}A[\Gamma^m]$, we have
  % \[
  % \infer[m < k]{\Psi^n \der (\Phi', x^m{:}A[\Gamma^m])^{\max(n,k)} \jctx}
  % {\Psi^n \der {\Phi'}^k \ctx
  %   &
  %   \ctxinsert{\ctxrestrict{\Psi^n}k}{\ctxinsert{\ctxrestrict{\Phi'}m}{\Gamma^m}}
  %   \der A \chk \lftype
  %   & \ctxinsert{\ctxrestrict{\Psi^n}k}{\ctxrestrict{\Phi'}m} \der
  %   \Gamma^m \jctx}
  % \]
  % by inversion. By inductive hypothesis, $\edot \der
  % (\ctxinsert{\Psi^n}{\Phi'})^{\max(n,k)} \jctx$. We can weaken
  % $\ctxrestrict{\Psi^n}k$ in the last two premises above to
  % $\ctxinsert{\edot}{\ctxrestrict{\Psi^n}m}$ since $m < k$. Using
  % associativity of merging and the fact that
  % $\ctxinsert{\ctxrestrict{\Psi^n}m}{\ctxrestrict{\Phi^k}m} =
  % \ctxrestrict{(\ctxinsert{\Psi^n}{\Phi^k})}m$ by distributivity, we
  % have therefore,
  % \[
  % \infer{\edot \der ((\ctxinsert{\Psi^n}{\Phi'}), x^m{:}A[\Gamma^m])^{\max(n,k)} \jctx}
  % {\edot \der (\ctxinsert{\Psi^n}{\Phi'})^{\max(n,k)} \ctx
  %   &
  %   \ctxinsert{\ctxinsert{\edot}{(\ctxrestrict{\Psi^n}m}}{\ctxinsert{\ctxrestrict{\Phi'}m)}{\Gamma^m}}
  %   \der A \chk \lftype
  %   & \ctxinsert{\ctxinsert{\edot}{(\ctxrestrict{\Psi^n}m}}{\ctxrestrict{\Phi'}m)} \der
  %   \Gamma^m \jctx}
  % \]
  % as required.
\end{proof}
}

\begin{lemma}[Closure under chopping]
  If $\edot \der \Psi^k \jctx$
  then $\edot \der (\ctxrestrict{\Psi}n)^k \jctx$.
\end{lemma}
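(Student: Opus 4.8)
The plan is to argue by structural induction on $\Psi$ (equivalently, on the derivation of $\edot \der \Psi^k \jctx$, since the context-formation rules are syntax-directed). Only two rules can conclude $\Psi \der \Phi^n \jctx$, so inversion on $\edot \der \Psi^k \jctx$ yields two cases: either $\Psi = \edot$, or $\Psi = \Psi', x^l{:}A[\Gamma^l]$ with $l < k$, where the first premise of the extension rule gives $\edot \der \Psi'^k \jctx$; the remaining premises of that rule (which check $A$ and $\Gamma^l$) will not be needed.

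For $\Psi = \edot$ we have $\ctxrestrict{\edot}{n} = \edot$, and $\edot \der \edot \jctx$ holds by the empty-context rule. For $\Psi = \Psi', x^l{:}A[\Gamma^l]$ I would split on how $l$ compares with $n$. If $l \geq n$, then by the definition of chopping $\ctxrestrict{\Psi}{n} = \Psi', x^l{:}A[\Gamma^l] = \Psi$, so the goal is literally the hypothesis $\edot \der \Psi^k \jctx$. If $l < n$, then $\ctxrestrict{\Psi}{n} = \ctxrestrict{\Psi'}{n}$, and the induction hypothesis applied to $\Psi'$ (with the same parameter $n$) and the subderivation $\edot \der \Psi'^k \jctx$ yields $\edot \der (\ctxrestrict{\Psi'}{n})^k \jctx$, which is exactly the goal. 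The level annotation $k$ in the conclusion causes no difficulty, since chopping only discards declarations, so every level occurring in $\ctxrestrict{\Psi}{n}$ already occurred in $\Psi^k$ and is therefore $< k$; in particular no appeal to Cumulativity is needed.

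There is no genuine obstacle here; the one point worth spelling out is that \emph{no retained declaration ever has to be re-checked}. A priori one might worry that discarding the low-level variables invalidates the type $A[\Gamma^l]$ of a surviving higher-level assumption. But the ordered-context invariant guarantees that a surviving declaration $x^l$ (so $l \geq n$) depends, in its ambient part, only on declarations of level $\geq l \geq n$, all of which are kept; and $\ctxrestrict{\cdot}{n}$ removes exactly the low-level declarations, which by sortedness form a suffix of the ordered context rather than something in its interior. Hence the two observations of the inductive case suffice, with no appeal to weakening or substitution — the soundness of chopping is, in effect, already built into the sortedness discipline on contexts and into the recursive definition of $\ctxrestrict{\cdot}{n}$.
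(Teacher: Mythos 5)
Your proof is correct and follows the same route as the paper, which disposes of this lemma with a one-line induction on the derivation of $\edot \der \Psi^k \jctx$; your case analysis on the level of the last declaration, together with the observation that chopping only ever discards a suffix of the sorted context (so the surviving prefix's well-formedness is literally a sub-derivation), fills in exactly the details the paper leaves implicit.
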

\LONGVERSION{
\begin{proof}
  By induction on the derivation of $\edot \der \Psi^k \jctx$.
\end{proof}}

\begin{lemma}[Weakening, Identity]\mbox{}
  \begin{enumerate}
  \item If $\Psi \der J$ then $\ctxinsert{\Psi}{x^n{:}A[\Phi^n]} \vdash J$.
  \item Let $\Psi(n) = \wvec{x^n{:}A[\Phi^n]}$ denote the subcontext
    $\Psi(n) \subseteq \Psi$ of assumptions at level $n$. We have that
    $\Psi \der \Psi(n)$.
%  \item If $\Psi \der \Phi \jctx$ then $\ctxinsert{\Psi}{x^n{:}A[\Phi^n]} \der
%  \ctxinsert{\Phi}{x^n{:}A[\Phi^n]} \jctx$
  \end{enumerate}
\end{lemma}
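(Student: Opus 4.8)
I would prove weakening by a single structural induction on the derivation of $\Psi \der J$, handling all judgement forms at once ($M \chk A$, $R \syn A$, $\sigma \chk \Phi^n$, $A \chk s$, $A \syn K$, $\Phi \jctx$). For each typing rule the idea is simply to re-apply the very same rule with $\Psi$ replaced by $\ctxinsert{\Psi}{x^n{:}A[\Phi^n]}$; by $\alpha$-renaming we may assume $x^n$ fresh, so that variable and context-variable lookups are undisturbed, i.e.\ $(\ctxinsert{\Psi}{x^n{:}A[\Phi^n]})(y^k) = \Psi(y^k)$ whenever $y^k \ne x^n$. The only genuine content is checking that the chopping and merging expressions occurring in the premises commute with this insertion. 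To that end I would first isolate two auxiliary equations: (a) chopping distributes over merging, so $\ctxrestrict{(\ctxinsert{\Psi}{x^n{:}A[\Phi^n]})}{m}$ equals $\ctxinsert{\ctxrestrict{\Psi}{m}}{x^n{:}A[\Phi^n]}$ when $m \le n$ and equals $\ctxrestrict{\Psi}{m}$ when $m > n$; and (b) inserting a single assignment commutes past a merge, $\ctxinsert{(\ctxinsert{\Psi}{x^n{:}A[\Phi^n]})}{\rho} = \ctxinsert{(\ctxinsert{\Psi}{\rho})}{x^n{:}A[\Phi^n]}$ whenever every variable of $\rho$ has level $\le n$. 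Both are instances of the mergesort-stability reasoning already used in the proof of \emph{Closure under independent context merging}. With (a) and (b) available, each premise context in each rule either drops the fresh assignment (when its level is below the chop, in which case the original premise derivation is reused verbatim) or becomes a weakening of the original premise context by $x^n$ (in which case the induction hypothesis applies); the equality test $P = Q$ in the conversion rule is context-independent, and the $\Pi$-type and $\lambda$-rules are routine once (a) is in hand.

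\textbf{Proof plan (Part 2, Identity).} I read the second statement as $\Psi \der \Psi(n) \jctx$, under the standing assumption that $\Psi$ is well-formed, i.e.\ $\cdot \der \Psi \jctx$; this is exactly what licenses the identity substitutions $\id(\Gammahat^k)$ appearing in the substitution rule. I would argue by induction on $\Psi$ decomposed from the right, $\Psi = \Psi_1, y^j{:}B[\Delta^j]$, proving $\Psi \der \Psi(n) \jctx$ for all $n$ simultaneously. If $n \ne j$ then $\Psi(n) = \Psi_1(n)$, and we conclude from the induction hypothesis on $\Psi_1$ together with Part 1 (weakening by $y^j$). If $n = j$ then $\Psi(j) = \Psi_1(j), y^j{:}B[\Delta^j]$, and we apply the context-formation rule with the new assignment at level $j$ and ambient bound $j+1$. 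Its three premises are $\Psi \der \Psi_1(j) \jctx$ (induction hypothesis plus weakening), $\ctxinsert{\ctxrestrict{\Psi_1}{j}}{\Delta^j} \der B \chk \lftype$, and $\ctxrestrict{\Psi_1}{j} \der \Delta^j \jctx$. The key arithmetic point is that $\ctxinsert{\ctxrestrict{\Psi}{j+1}}{\ctxrestrict{\Psi_1(j)}{j}}$ simplifies to $\ctxrestrict{\Psi_1}{j}$ — because $\ctxrestrict{\Psi_1(j)}{j} = \Psi_1(j)$, because $\ctxrestrict{\Psi}{j+1} = \ctxrestrict{\Psi_1}{j+1}$, and because the level-$\ge j{+}1$ part of $\Psi_1$ merged with its level-$j$ part is precisely $\ctxrestrict{\Psi_1}{j}$ — so the last two premises are exactly what inverting $\cdot \der \Psi \jctx$ at its rightmost assignment $y^j$ delivers. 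Cumulativity is available to reconcile level bounds should the induction hypothesis yield a looser upper bound than the rule expects.

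\textbf{Main obstacle.} The mathematics is light in both parts; the real work is the context bookkeeping — verifying that the stacked $\ctxrestrict{-}{-}$ and $\ctxinsert{-}{-}$ expressions in the premises of the substitution rule and the context-formation rule rearrange exactly as claimed. I expect that once the rewriting equations (a) and (b) of Part 1 and the identity relating $\ctxrestrict{\Psi}{j}$ to $\ctxrestrict{\Psi}{j+1}$ merged with $\Psi(j)$ are established carefully — paying attention to the tie-breaking convention ``$k \le n$'' in the merge definition and to the freshness of $x^n$ — every case of both inductions collapses to a mechanical substitution of equals for equals.
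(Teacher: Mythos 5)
Your plan matches the paper's own (one-line) proof: part 1 is by induction on the derivation of $\Psi \der J$, with the context bookkeeping discharged by the closure-under-merging properties, and part 2 follows from well-formedness of $\Psi$ plus weakening. Your elaboration of the chop/merge commutation equations and the identity $\ctxinsert{\ctxrestrict{\Psi_1}{j+1}}{\Psi_1(j)} = \ctxrestrict{\Psi_1}{j}$ is correct and simply fills in detail the paper leaves implicit.
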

\LONGVERSION{
\begin{proof}
  By induction on the derivation of $\Psi\der J$ and closure of
  well-formed contexts under merging.
\end{proof}}

% \begin{definition}[$k$-categorical]
%   A judgement $\Psi\der J$ is said to be $k$-categorical when
%   \(\Psi\der J \Leftrightarrow \ctxrestrict{\Psi}k \der J\).
% \end{definition}
% x
\begin{lemma}[Well-formedness of contexts at level $k$]
$\Psi \der \Phi^k \jctx$  iff $\ctxrestrict{\Psi}{k} \der \Phi^k \jctx$.
\end{lemma}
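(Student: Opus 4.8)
The plan is to show that a derivation of a context-well-formedness judgment $\Psi \der \Phi^k \jctx$ depends on $\Psi$ only through its chop $\ctxrestrict{\Psi}{k}$, from which both directions follow immediately. The one auxiliary fact needed is \emph{idempotence of chopping}, $\ctxrestrict{(\ctxrestrict{\Psi}{k})}{k} = \ctxrestrict{\Psi}{k}$, which is immediate by induction on $\Psi$ from the defining equations of $\ctxrestrict{\cdot}{k}$ (more generally $\ctxrestrict{(\ctxrestrict{\Psi}{j})}{k} = \ctxrestrict{\Psi}{\max(j,k)}$, but only the idempotent instance is used here).

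For the left-to-right direction I proceed by induction on the derivation of $\Psi \der \Phi^k \jctx$. If $\Phi^k = \edot$, then $\ctxrestrict{\Psi}{k} \der \edot \jctx$ holds by the axiom. If $\Phi^k = \Phi'^k, x^j{:}A[\Gamma^j]$ with $j < k$, the premises are $\Psi \der \Phi'^k \jctx$, together with $\ctxinsert{\ctxinsert{\ctxrestrict{\Psi}{k}}{\ctxrestrict{\Phi'^k}{j}}}{\Gamma^j} \der A \chk \lftype$ and $\ctxinsert{\ctxrestrict{\Psi}{k}}{\ctxrestrict{\Phi'^k}{j}} \der \Gamma^j \jctx$. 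The induction hypothesis on the first premise gives $\ctxrestrict{\Psi}{k} \der \Phi'^k \jctx$; the other two premises already mention $\Psi$ only in the form $\ctxrestrict{\Psi}{k}$, and replacing that occurrence by $\ctxrestrict{(\ctxrestrict{\Psi}{k})}{k}$ changes nothing by idempotence. Hence the same rule applies with $\ctxrestrict{\Psi}{k}$ in place of $\Psi$, yielding $\ctxrestrict{\Psi}{k} \der \Phi^k \jctx$.

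For the right-to-left direction, observe that since contexts are sorted with higher levels to the left, $\Psi$ decomposes as a concatenation $\ctxrestrict{\Psi}{k}, \Psi'$ where $\Psi'$ consists exactly of the assumptions of $\Psi$ at levels strictly below $k$; equivalently $\Psi = \ctxinsert{\ctxrestrict{\Psi}{k}}{\Psi'}$. Starting from $\ctxrestrict{\Psi}{k} \der \Phi^k \jctx$ and applying the Weakening lemma once for each declaration of $\Psi'$, each re-inserted at its sorted position by the merge operation, we obtain $\Psi \der \Phi^k \jctx$; stability of merge guarantees that adding these low-level declarations back reconstructs $\Psi$ on the nose.

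The only real point of care is this last direction: one must be precise that $\ctxrestrict{\cdot}{k}$ returns a \emph{prefix} of $\Psi$ — a consequence of the sortedness invariant together with the fact that the chopping equations stop recursing at the first assumption of level $\geq k$ — so that the dropped assumptions can be merged back in to recover $\Psi$ exactly, and that each individual re-insertion is licensed by Weakening. Everything else is a routine structural induction.
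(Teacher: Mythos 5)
Your proof is correct, and the left-to-right half is exactly the paper's argument: induction on the derivation, observing that the non-inductive premises of the context-formation rule consult $\Psi$ only through $\ctxrestrict{\Psi}{k}$, so idempotence of chopping lets you swap $\Psi$ for $\ctxrestrict{\Psi}{k}$. Where you diverge is the converse. The paper's (very terse) proof intends the same symmetric observation to dispose of both directions at once: since every premise of every rule in a derivation of $\Gamma \der \Phi^k \jctx$ depends on the ambient context $\Gamma$ only via $\ctxrestrict{\Gamma}{k}$, the derivations for $\Psi$ and for $\ctxrestrict{\Psi}{k}$ are literally the same tree modulo the context named in the conclusion and the first premise, so interderivability in both directions is one induction, not two. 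You instead prove right-to-left by decomposing $\Psi$ into $\ctxrestrict{\Psi}{k}$ followed by its low-level suffix and re-inserting each dropped declaration by the Weakening lemma. That works — your observation that chopping returns a prefix of a sorted context is the right thing to check, and Weakening as stated in the paper licenses each insertion — but it is heavier than needed and imports a dependency on Weakening that the paper's self-contained induction avoids. Both routes are sound; the paper's buys uniformity, yours makes the converse direction's mechanism more explicit.
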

\LONGVERSION{
\begin{proof}
  For a context $\Phi$ at level $n$ in $\Psi$, by induction on the
  derivation of the well-formation judgement $\Psi \der \Phi^n \jctx$,
  remarking that chopping is idempotent:
  $\ctxrestrict{\ctxrestrict{\Phi}{n}}{n} = \ctxrestrict{\Phi}{n}$.
\end{proof}
}

%%% Local Variables:
%%% mode: latex
%%% TeX-master: "paper"
%%% End:

\newcommand{\limplies}{\rightarrow}
\newcommand{\darrow}{\Rightarrow}
\newcommand{\erasex}[1]{{#1}^{-}}
\newcommand{\ldot}{.\,}

\subsection{Hereditary Substitution}\label{sec:sub}

Normal terms are not closed under vanilla substitution, a rather
problematic matter of fact given that our syntax can only express
normal forms. For example, when replacing naively $x$ by $\lambda
y.c\;y$ in the object $x\;z$, we would obtain $(\lambda y.c\;y)\;z$.
It is essential therefore to iron out as we go any redices we might
create as a result of substituting terms for variables. We hence
follow \cite{Watkins02tr} in defining a {\em hereditary} substitution,
which does just that. That hereditary substitutions always terminate
on well-typed normal terms is crucial to ensuring that our typing
rules are decidable. In the above example, hereditary substitutions
continue to substitute $z$ for $y$ in $c\;y$ to obtain $c\;z$ as a
final result. This idea scales to our setting, but we must be careful
to observe the scope of variables.

Hereditary substitution are defined structurally considering the term
to which the substitution operation is applied and the type of the object which
is being substituted. The type is only needed to construct evidence of
termination. We define the hereditary substitution operations for
types, normal object, neutral objects, substitutions, and contexts.

In the formal development it is simpler if we can stick to the structure
of the example above and use only non-dependent types in hereditary
substitutions.  This suffices because we only need to know whether we have
encountered a function which can be reduced further or whether we have reached
an object of base type and reduction will terminate. We therefore first define
type approximations $\alpha$ and an erasure operation $\erasex{()}$ that removes
dependencies.  Before applying any hereditary substitution
$[\Phihat^n.N/x^n]_{A[\Phi^n]}(M)$ we first erase dependencies to obtain
$\alpha[\phi] = \erasex{(A[\Phi])}$ and then carry out the
hereditary substitution proper as $[\Phihat^n.N/x^n]_{\alpha[\phi]}B$.  A
similar convention applies to the other forms of hereditary substitutions.

\[
\begin{array}{rrcl}
\mbox{Type approximation} & \alpha, \beta & ::= & a % \mid \alpha \limplies \beta
\mid \alpha[\gamma] \darrow \beta \\
\mbox{Context approximation} & \gamma, \psi & ::= & \cdot \mid \gamma, x^n{:}\alpha[\phi]
\mid \gamma, x^n{:}\_
\end{array}
\]
The last form of context approximation, $x^n{:}\_$ is needed when the
approximate type of $x^n$ is not available.\footnote{See the definition
of $[\sigma]_{\psi^n}(\lam{y^n}{M})$ in the electronic appendix.}  It
does not arise directly from erasure.

Types and contexts are related to type and context approximations via
an erasure operation $()^-$ which we overload to work on types
and contexts.
\[
\begin{array}{lcll}
\erasex{(a)} & = & a \\
%\erasex{(\app{P}{N})} & = & \erasex{(P)} \\
\erasex{(\app{P}{(\Psihat.N)})} & = & \erasex{P} \\
%\erasex{(\Pi{x{:}A}\ldot B)} & = & \erasex{A} \limplies \erasex{B} \\
\erasex{(\Pi{x^n{::}A[\Psi^n]}\ldot B)} & = & \erasex{(A)}[\erasex{(\Psi^n)}] \darrow \erasex{(B)} \\[1ex]
\erasex{(\cdot)} & = & \cdot \\
\erasex{(\Psi^n, x^k{:}A[\Phi^k])} & = & \erasex{(\Psi^n)}, x^k{:}\erasex{(A)}[\erasex{(\Phi^k)}]
\end{array}
\]

Herediatary substitution is given by the following equations. We
overload the substitution operation to work on normal terms, neutral terms,
substitutions, and contexts.

 \[ \begin{array}{ll}
  [\Phihat^n.N/x^n]_{\alpha[\phi]}(N) = N' & \mbox{Hereditary substitution into $N$} \\ \relax
  [\Phihat^n.N/x^n]_{\alpha[\phi]}(R) = R' \quad \mbox{or} \quad M' : \alpha' &
  \mbox{Hereditary substitution into $R$} \\ \relax
  [\Phihat^n.N/x^n]_{\alpha[\phi]}(\sigma) = \sigma' & \mbox{Hereditary
    substitution composition}\\ \relax%[1em]
  % [\Phihat^n.N/x^n]_{\alpha[\phi]}(A) = A' & \mbox{Hereditary substitution into $A$} \\ \relax
  % [\Phihat^n.N/x^n]_{\alpha[\phi]}(P) = P' & \mbox{Hereditary substitution into $P$}\\\relax
  [\Phihat^n.N/x^n]_{\alpha[\phi]}(\Psi) = \Psi' & \mbox{Hereditary substitution into $\Psi$}
 \end{array} \]

Applying a substitution to a neutral term, may yield either a neutral term or a
normal term together with a type approximation. In the latter case, we can
simply drop the type approximation, since it is only necessary for guaranteeing
that any reductions triggered when applying the substitution to a neutral term
will terminate.

\paragraph{Substitution into normal and neutral terms}
We present hereditary substitution for normal terms and neutral terms,
substitutions and context. The definitions for types can be found in the
appendix.

\begin{figure}[htb]
  \centering
\[
\begin{array}{lclp{7.5cm}}
\multicolumn{4}{l}{\mbox{Substitution into normal terms}}\\[0.25em]
\msub{\Phihat^n.N/x^n}_{\alpha[\phi]} (\lambda y^k.M) & = & \lambda y^k. M' & where
$\msub{\Phihat^n.N/x^n}_{\alpha[\phi]}(M) = M'$ if $k < n$
\\[0.25em]
\msub{\Phihat^n.N/x^n}_{\alpha[\phi]} (\lambda y^k.M) & = & \lambda y^k. M' & where
$\msub{\Phihat^n.N/x^n}_{\alpha[\phi]}(M) = M'$ \\
&&& if $y^k \not\in \FV(\Phihat^n.N)$ and $k \geq n$
\\[0.25em]
\msub{\Phihat^n.N/x^n}_{\alpha[\phi]} (R) & = & M
       & \mbox{if $[\Phihat^n.N/x^n]_{\alpha[\phi]}(R) = M' : \alpha'$}  \\ \relax
 \msub{\Phihat^n.N/x^n}_{\alpha[\phi]}(R) & = & R'
       & \mbox{if $[\Phihat^n.N/x^n]_{\alpha[\phi]}(R) = R'$} \\ \relax
 [\Phihat^n.N/x^n]_{\alpha[\phi]}(N) &  & \mbox{fails} & \mbox{otherwise}
\\[0.75em]% \relax
\multicolumn{4}{l}{\mbox{Substitution into  neutral terms}}\\[0.5em]\relax
\msub{\Phihat^n.N/x^n}_{\alpha[\phi]} (\const{c}) & = & \const{c} \\
\msub{\Phihat^n.N/x^n}_{\alpha[\phi]} (x^n[\sigma])  & = & N' : \alpha
& where $\msub{\Phihat^n.N/x^n}_{\alpha[\phi]}(\sigma) = \sigma'$ \\
&&& and $[\sigma']_{\phi}(N)= N' $
\\[0.25em]
\msub{\Phihat^n.N/x^n}_{\alpha[\phi]} (y^k[\sigma])  & = &
y^k[\sigma'] & where $\msub{\Phihat^n.N/x^n}_{\alpha[\phi]}(\sigma) = \sigma'$ if $y^k \neq x^n$
\\[0.25em]
\msub{\Phihat^n.N/x^n}_{\alpha[\phi]}(\app R (\Psihat^k.M)) & = & \app {R'}\;(\Psihat^k.M') & where
$\msub{\Phihat^n.N/x^n}_{\alpha[\phi]}(R) = R'$ \\
&&& and $\msub{\Phihat^n.N/x^n}_{\alpha[\phi]}(M) = M' $
if $k \leq n$
% and $\Psihat \cap \FV(\Phi) = \emptyset$
\\[0.25em]
\msub{\Phihat^n.N/x^n}_{\alpha[\phi]} (\app R (\Psihat^k.M)) & = & \app {R'}\;(\Psihat^k.M) & where
$\msub{\Phihat^n.N/x^n}_{\alpha[\phi]}(R) = R'$ if $k > n$
\\[0.25em]
\msub{\Phihat^n.N/x^n}_{\alpha[\phi]}(\app R (\Psihat^k.M)) & = & N'' : \beta  & where
$\msub{\Phihat^n.N/x^n}_{\alpha[\phi]}(R) \!=\! {\lam {y^k}{N'}{:} {\gamma[\psi]
    \rightarrow \beta}}$\\
&&& and
     $M' = \msub{\Phihat^n.N/x^n}_{\alpha[\phi]}(M)$ \\
&&& and $N'' = [\Psihat^k.M'/y^k]_{\gamma[\psi]}(N')$ \\
&&& if $\gamma[\psi] \rightarrow \beta \leq \alpha[\phi]$ and $k \leq n$
\\[0.25em]
\msub{\Phihat^n.N/x^n}_{\alpha[\phi]}(\app R (\Psihat^k.M)) & = & N'' : \beta  &
where $\msub{\Phihat^n.N/x^n}_{\alpha[\phi]}(R) \!=\! {\lam {y^k}{N'}{:} {\gamma[\psi] \rightarrow \beta}}$ \\
&&& and $N'' = [\Psihat^k.M/y^k]_{\gamma[\psi]}(N')$ \\
&&& if $\gamma[\psi] \rightarrow \beta \leq \alpha[\phi]$ and $k > n$ \\
\msub{\Phihat^n.N/x^n}_{\alpha[\phi]}(R) & & \mbox{fails} & \mbox{otherwise}
\end{array}
\]
  \caption{Hereditary substitution on normal terms and neutral terms}
  \label{fig:msub}
\end{figure}
%

% We begin with defining substitution into normal terms. 

We define $[\Phihat^n.N/x^n]_{\alpha[\phi]}(M)$ and
$[\Phihat^n.N/x^n]_{\alpha[\phi]}(R)$ by nested induction, first on the structure of the type approximation
$\alpha[\phi]$ and second on the structure of the objects $N$ and $R$.  In
other words, we either go to a smaller type approximation (in which
case the objects can become larger), or the type approximation remains
the same and the objects become smaller.  We write $\alpha \leq \beta$
and $\alpha < \beta$ if $\alpha$ occurs in $\beta$ (as a proper
subexpression in the latter case).  Such occurrences can be inside a
context approximation $\psi$ in the function approximation
$\beta_1[\psi]\darrow \beta_2$, so we also write $\alpha < \psi$ if
$\alpha \leq \beta$ for some $y^k{:}\beta[\gamma]$ in $\psi$, and we write
$\alpha < \beta[\psi]$ if $\alpha \leq \beta$ or $\alpha < \psi$.

When defining
substitutions, we must be careful to take into account where 
multi-level variables are bound. For example, when applying
$\msub{\Phihat^n.N/x^n}$ to a lambda-abstraction $\lambda x^k.M$, we must check
for possible capture, if $k \geq n$. Recall that $\Phihat^n.N$ only binds
variables up to level $n$ locally and $N$ can still refer to variables greater
or equal to $n$ which have from $N$'s perspective a global status. Therefore, if
$k \geq n$, we must ensure that $y^k$ does not occur in the free variables of
$\Phihat^n.N$, written as $\FV(\Phihat^n.N)$. If $k < n$, then $y^k$ can in fact
not appear in $\Phihat^n.N$ because all variables of level $k$ are bound in
$\Phihat^n$. Recall that all variables $x^n[\sigma]$ exist as closures, and hence all
variables in $\Phihat^n$ will be substituted for using $\sigma$.

When considering the substitution operation on neutral terms, two cases are
interesting, applying the substitution to a variable and to an application.
When we apply $\Phihat^n.N/x^n$ to a variable $y^k[\sigma]$, we apply it to
$\sigma$ obtaining $\sigma'$ as a result. If $y^k \neq x^n$, we simply return
$y^k[\sigma']$.  If $y^k = x^n$, i.e. $n = k$ and $y = x$, then we must continue
to apply $\sigma'$ to $N$ obtaining $N'$. We then return $N' : \alpha$, because
$y^k[\sigma]$ may have occurred in a functional position, and we must trigger a
$\beta$-reduction step, if it is applied.

Propagating the hereditary substitution $[\Phihat^n.N/x^n]$ through an application
$\app R {(\Psihat^k.M)}$ is split into multiple cases considering the level and
the  possible elimination of created redices. If $k \leq n$, we need to apply
the substitution not only to $R$ but also to $(\Phihat^k.M)$, because $M$ may
refer to $x^n$; otherwise, we only need to apply the substitution to $R$,
because all occurrences of $x^n$ in $M$ are bound locally by $\Phihat^k$.

If applying the substitution to $R$ produces a normal term $\lambda y^k.N'$,
then we must continue to substitute and replace $y^k$ with $(\Psihat^k.M)$ in
$N'$. The approximate type annotations on the substitution guarantees that the
approximate type of the lambda-abstraction is smaller than the approximate type
of $x^n$, and hence this hereditary substitution will terminate.

\paragraph{Single substitution into simultaneous substitutions}
Applying $\Phihat^n.N/x^n$ to a
substitution $\sigma$ is done recursively and is straightforward, when we keep
in mind the fact that all variables below $k$ are bound within $M$ when we
encounter $\Psihat^k.M$. If $k \geq n$, then applying the substitution
$\Phihat^n.N/x^n$ to $\Psihat^k.M$ will leave it unchanged. Only if $k \leq n$, we
push the substitution $\Phihat^n.N/x^n$ through $M$. When we encounter $\sigma',
\ctxvarex y^k$ where $y^k \neq x^n$, we simply apply the substitution to
$\sigma'$. If we encounter $\sigma', \ctxvarex x^n$, then we must replace $x^n$
by $\Phihat^n.N$.

%\begin{figure}[htb]
%  \centering
\[
\begin{array}{lclp{7.5cm}}
%\multicolumn{4}{l}{\mbox{Single substitution to simultaneous substitutions}}\\[1em]
\msub{\Phihat^n.N/x^n}_{\alpha[\phi]}(\cdot) & = & \cdot
\\%[0.5em]
\msub{\Phihat^n.N/x^n}_{\alpha[\phi]}(\sigma , \Psihat^k.M) & = & \sigma' ,\Psihat^k.M' &
where $\msub{\Phihat^n.N/x^n}_{\alpha[\phi]}(M) = M'$ \\
&&& and $\msub{\Phihat^n.N/x^n}_{\alpha[\phi]}(\sigma) = \sigma'$ if $k \leq n$
\\%[0.5em]
\msub{\Phihat^n.N/x^n}_{\alpha[\phi]}(\sigma , \Psihat^k.M) & = & \sigma' ,\Psihat^k.M &
where $\msub{\Phihat^n.N/x^n}_{\alpha[\phi]}(\sigma) = \sigma'$ if $k > n$
\\%[0.5em]
\msub{\Phihat^n.N/x^n}_{\alpha[\phi]}(\sigma, \ctxvarex x^n) & = & \sigma' , \Phihat^n.N & where $\msub{\Phihat^n.N/x^n}_{\alpha[\phi]}(\sigma) = \sigma'$
\\%[0.5em]
\msub{\Phihat^n.N/x^n}_{\alpha[\phi]}(\sigma, \ctxvarex y^k) & = &
\sigma', \ctxvarex y^k & where
$\msub{\Phihat^n.N/x^n}_{\alpha[\phi]}(\sigma) = \sigma'$ if $y^k \neq x^n$
\end{array}
\]
%  \caption{Hereditary application of a single substitution to a simultaneous substitutions}
%  \label{fig:msub}
%
%\end{figure}

\paragraph{Substitution into contexts}
Applying $\Phihat^n.N/x^n$ to a context
proceeds recursively on the structure of the context until we encounter a
declaration $x^k$ where $k > n$. Because our contexts are ordered, we know that
the remaining context cannot contain an occurrence of $x^n$.

%\begin{figure}
% \centering
\[
\begin{array}{lclp{7.5cm}}
%\multicolumn{4}{l}{\mbox{Substitution into a context}}\\[1em]
\msub{\Phihat^n.N/x^n}_{\alpha[\phi]}(\cdot) & = & \cdot \\
\msub{\Phihat^n.N/x^n}_{\alpha[\phi]}(\Psi, y^k{:}A[\Gamma^k]) & = & \Psi',
y^k{:}A'[\Gamma'^k] & where $\msub{\Phihat^n.N/x^n}_{\alpha[\phi]} (\Psi) = \Psi'$
\\%[0.5em]
&&&and $\msub{\Phihat^n.N/x^n}_{\alpha[\phi]}(A[\Gamma^k]) = A'[\Gamma'^k]$ if $k \leq n$
\\%[0.5em]
\msub{\Phihat^n.N/x^n}_{\alpha[\phi]}(\Psi, y^k{:}A[\Gamma^k]) & = & \Psi,
y^k{:}A[\Gamma^k] & \mbox{if $k > n$}
\end{array}
\]
%  \caption{Hereditary substitution on contexts}
%  \label{fig:msub}
%\end{figure}

\paragraph{Simultaneous substitutions} Similar to the single substitution
operation, the simultanous substitution operation is indexed with the domain
of $\sigma$ described by the approximate context $\phi$. To define simultaneous
substitutions it is worth recalling how they arise: A simultaneous substitution
$\sigma$ is associated with a variable $x^n$ which has type $A[\Phi^n]$. As a
consequence, $\sigma$ provides instantiations for
variables declared in $\Phi^n$, i.e. variables up to level $n$. Any variable
at level $n$ or above is bound in the ambient context and $\sigma$ will not
provide any substitutions for such variables. It thus makes sense to
write any upper bound on a simultaneous substitution explicitly as
in $\sigma^n$, just as for contexts, but we generally choose to omit it.
Defining the simultaneous substitution we must be careful to ensure
that the usual substitution properties for simultaneous substitution
holds --- if $\Phi \der \sigma \chk \Psi^n$ and $\ctxrestrict{\Gamma}{n}\ctxmerge\Psi^n
\der J$ then $\ctxrestrict{\Gamma}{n}\ctxmerge\Phi \der
[\sigma]_{\psi}J$. $\ctxrestrict{\Gamma}{n}$ is the ambient context which
remains untouched by the simultanous substitution $\sigma$.

The typing rules act again as a guide in our definitions. For lambda-abstractions for instance, pushing a
simultanous substition $\sigma^n$ through $\lambda y^k.M$ we need to distinguish
cases based on the level: if $k < n$, then we must extend the substitution
$\sigma$ with the identity mapping for $y^k$. From the typing rule for
lambda-abstractions, we see that contexts are ordered and we insert the new
declaration $y^k$ at its correct position using context merging. Similarly, we
need to define substitution merging operation which will extend $\sigma$ and
insert $y^k$ at its correct position. If $k \geq n$, then $\sigma$ does not need
to be extended as we push $\sigma^n$ through the $\lambda$-abstraction because
the substituion $\sigma^n$ has no effect on variables above level $n$.

For applications for instance, applying a simultaneous substitution
$\sigma$ where $\Gamma \der \sigma \chk \Psi^n \jctx$ should take
$\app{R}{(\Phihat^m.N)}$ from the context $\Psi$ to the context
$\Gamma$. This is justified in a straightforward manner by appealing
to the induction hypothesis on the premises of the typing rule.
However, we cannot directly appeal to the induction hypothesis in the
second premise since the assumptions do not match the domain of
$\sigma$. If we want the substitution property to hold, we must define
substitution chopping operation similar the context chopping operation
together with an identity substitution for mapping the variables in
$\Phi^n$ to themselves. We therefore define chopping as
$\ctxrestrict{(\sigma/\psi)}{n}$ inductively on the structure of
$\sigma$ and its domain $\psi$. Both for chopping off parts of a
substitution and merging substitutions, we will resurrect the domain
of the substitution to have access to the level of each variable for
which the substitution provides a mapping. For convenience, we omit
writing out the resurrected contexts during merging and chopping when
they are understood. We also write $F$ for $\ctxvarex{x^n}$ and
$\Psihat^n.N$.

%% sigma needs its domain \psi to do the chopping in sync wiht psi|n
\label{page:chopsub}
\[
%\begin{array}{lcll}
\begin{array}{ll@{\;}rcll}
\multicolumn{5}{l}{\mbox{Merging substitutions:} \ctxinsert{\sigma/\psi}{\tau/\phi} = \rho}\\
\multicolumn{3}{l}{\ctxinsert{\edot/\edot}{\tau/\phi}} &= &\tau \\
\multicolumn{3}{l}{\ctxinsert{\sigma/\psi}{\edot/\edot}} &= &\sigma \\
\multicolumn{3}{l}{\ctxinsert{(\sigma/\psi,~F/x^n)}{(\tau/\phi, ~F'/y^k)}} &= &
    (\ctxinsert{(\sigma/\psi, ~F/x^n)}{\tau/\phi}), F'
    &\text{if $k  \leq n$} % BP: changed to \leq; was <
    \\
\multicolumn{3}{l}{\ctxinsert{(\sigma/\psi,~ F/x^n)}{(\tau/\phi, ~F'/y^k)}} &= &
    (\ctxinsert{\sigma/\psi}{(\tau/\phi,~F'/y^k)}), F
    &\text{otherwise}
\\[1em]
%\end{array}
%\]
%\[
%\begin{array}{l@{\;}rcll}
\multicolumn{5}{l}{\mbox{Chopping substitutions:} 
\quad\ctxrestrict{(\sigma/\psi)}{n}~ =~  \tau\hspace{6cm}
}\\
\quad\quad\quad\quad\quad & 
   \cdot & \ctxrestrict{}{n} &= & \cdot\\ \relax
 &    (\sigma / \psi, & \ctxrestrict{F /y^k)} {n} &= &
    \ctxrestrict{(\sigma / \psi)}{n}
    &\text{if $k < n$}\\
 &    (\sigma/\psi, &  \ctxrestrict{F/y^k)}{n} &= & \sigma, F
     &\text{if $k \geq n$}
%    \\
%    (\sigma / \psi, & \ctxrestrict{\ctxvarex{x^k}/y^k)} n &= &
%    \ctxrestrict{(\sigma / \psi)}n
%    &\text{if $k < n$}
%    \\
%    (\sigma /\psi, & \ctxrestrict{~\Phihat^k.N/y^k)}n &= & \ctxrestrict{(\sigma/\psi)}n
%    &\text{if $k < n$}
    \\
   % (\sigma/\psi, &  \ctxrestrict{\ctxvarex{x^k}/y^k)}n &= & \sigma,\ctxvarex{x^k}
   %  &\text{if $k \geq n$}
   %  \\
   % (\sigma/\psi, &  \ctxrestrict{~\Phihat^k.N/y^k)}n &= & \sigma, \Phihat^k.N
   %  &\text{if $k \geq n$}
  \end{array}
\]

The identity substitution, written as $\id(\Phihat)$ is the simple unrolling of
$\Phihat$ into a substitution: $\id(\cdot) = \cdot$ and $\id(\Phihat,x^l) =
\id(\Phihat),\ctxvarex{x^l}$. 
% \[
% \begin{array}{l@{\;}rcll}
% \multicolumn{4}{l}{\mbox{Identity substitution:} \quad\id(\Phihat)~ =~  \tau}
% \\
%    \multicolumn{2}{l}{\id(\cdot)}        &= & \cdot \\
%    \multicolumn{2}{l}{\id(\Phihat, x^k)} &= & \id(\Phihat), \ctxvarex{x^k}
%   \end{array}
% \]
Due to lack of space we omit the definition of simultanous substitution here
(see appendix for the full definition).

\paragraph{Properties of substitutions}
If the original term is not well-typed, a hereditary substitution,
though terminating, cannot always return a meaningful term. In that case, we
simply fail to return a result. Later we show that on well-typed
terms, hereditary substitution always returns well-typed terms.

%As a convention when discussing expressions that may fail to be
%defined, whenever we state an equality $T_1 = T_2$, we imply that
%$T_1$ and $T_2$ are both defined and equal.

Applying the substitution to an object will terminate because either we apply
the substitution to a sub-expression or the objects we substitute are
smaller. The following substitution property holds for types, terms,
substitutions and contexts.

\begin{lemma}[Termination] \mbox{}
  \begin{enumerate}
  \item If $\msub{\Psihat^n.N/x^n}_{\alpha[\psi]}(R) = M' : \beta$ then $\beta \leq \alpha[\psi]$.
  \item $\msub{\Psihat^n.N/x^n}_{\alpha[\psi]}(\_)$ terminates, either by returning a result or failing after a finite
number of steps.
  \end{enumerate}
\end{lemma}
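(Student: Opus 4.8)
The plan is to prove both parts simultaneously by the same nested induction used to \emph{define} hereditary substitution: an outer induction on the structure of the type approximation $\alpha[\psi]$, and an inner induction on the structure of the object ($N$, $R$, $\sigma$, or $\Psi$) to which the substitution is applied. The key point, already baked into the definition, is that every recursive call appearing on the right-hand side of an equation either (a) keeps $\alpha[\psi]$ fixed and recurses on a strict subterm of the object, or (b) passes to a strictly smaller type approximation $\gamma[\psi'] \darrow \beta$ or $\gamma[\psi']$ or $\beta$, in which case the object is allowed to grow. So the well-founded order is the lexicographic product $(\alpha[\psi], \text{object})$, and I would first record the easy monotonicity facts: if $\gamma[\psi'] \darrow \beta \leq \alpha[\phi]$ then $\gamma[\psi'] < \alpha[\phi]$ and $\beta < \alpha[\phi]$, and $\gamma[\psi']$ occurring in a context approximation inside $\alpha[\phi]$ is still $< \alpha[\phi]$ by the definition of $<$ on $\beta_1[\psi]\darrow\beta_2$ given in the text.

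For part (1), I would prove by the same induction that whenever $\msub{\Psihat^n.N/x^n}_{\alpha[\psi]}(R) = M' : \beta$, we have $\beta \leq \alpha[\psi]$. The base case is the variable case $x^n[\sigma]$: here the result is $N' : \alpha$, and $\alpha \leq \alpha[\psi]$ holds trivially (the erasure convention guarantees $\alpha$ is the head of $\alpha[\psi]$, hence $\alpha \leq \alpha[\psi]$). The only non-trivial inductive case is the application $\app R {(\Psihat^k.M)}$ where substitution into $R$ produces a $\lambda$-abstraction $\lam{y^k}{N'} : \gamma[\psi']\darrow\beta$: by the inner induction hypothesis on $R$, $\gamma[\psi']\darrow\beta \leq \alpha[\psi]$, so in particular $\beta < \alpha[\psi]$; then $N'' = \msub{\Psihat^k.M'/y^k}_{\gamma[\psi']}(N')$ is a substitution at the strictly smaller type approximation $\gamma[\psi']$, so by the \emph{outer} induction hypothesis, if $N''$ carries a type $\beta''$ then $\beta'' \leq \gamma[\psi'] < \alpha[\psi]$ — but actually what we return is $N'' : \beta$, and we already have $\beta < \alpha[\psi]$, so the claim holds. (If $N''$ itself carries a type, one observes it equals $\beta$ up to the bookkeeping, or one simply drops it.) All other cases return either a neutral term with no type annotation, or fall under the analogous argument.

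For part (2), termination, I would walk through each defining equation of $\msub{-}_{\alpha[\psi]}$ for the four syntactic classes and check that every recursive call descends in the lexicographic order described above; combined with the fact that each equation does only finitely much work between recursive calls (a bounded amount of case analysis, context merging/chopping, free-variable checks, and identity-substitution construction, all of which are structurally terminating), well-founded recursion gives termination — returning a result or failing. The cases that merit explicit mention: the $\lambda$-abstraction cases recurse on $M$ with $\alpha[\psi]$ fixed, after a free-variable check $y^k \notin \FV(\Psihat^n.N)$ which is decidable; the variable case $x^n[\sigma]$ first recurses on $\sigma$ (subterm, $\alpha[\psi]$ fixed), then invokes the \emph{simultaneous} substitution $\msub{\sigma'}_{\phi}(N)$, which is at type approximation $\phi$ — here I need $\phi < \alpha[\psi]$, which holds because $\alpha[\phi]$ was the erasure of the type $A[\Phi^n]$ of $x^n$ and $\phi \leq \alpha[\phi]$ (indeed $\phi$ sits inside $\alpha[\phi]$), so $\phi < \alpha[\phi] = \alpha[\psi]$; the application-with-redex cases recurse on $R$ and $M$ with $\alpha[\psi]$ fixed, then fire $\msub{\Psihat^k.M'/y^k}_{\gamma[\psi']}(N')$ at the strictly smaller $\gamma[\psi']$ (justified by the side condition $\gamma[\psi']\darrow\beta \leq \alpha[\phi]$ together with part (1) applied to the call that produced $\lam{y^k}{N'}$, guaranteeing the annotation is accurate).

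The main obstacle is making the interaction between the single hereditary substitution and the \emph{simultaneous} substitution $\msub{\sigma'}_{\phi}$ fit into one well-founded measure: I must confirm that whenever the single-substitution equations invoke a simultaneous substitution (in the $x^n[\sigma]$ case, and wherever else the full definition in the appendix does so), the type approximation $\phi$ indexing that call is strictly smaller than the ambient $\alpha[\psi]$, and symmetrically that the simultaneous-substitution clauses only ever invoke single substitutions at approximations $\leq$ their own index. This forces the real induction to be over the \emph{mutually} defined family of all four/five substitution operations ordered lexicographically by (shared type approximation, then object), and the bookkeeping that $\phi \leq \alpha[\phi]$ for the erasure $\alpha[\phi] = \erasex{(A[\Phi^n])}$ — so that descending into the postponed substitution's domain is a genuine decrease — is the crux; once that is pinned down, every remaining case is routine structural descent.
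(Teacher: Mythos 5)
Your proposal is correct and follows exactly the argument the paper intends (and only sketches): a lexicographic induction on the type approximation and then on the object, with part (1) supplying the invariant that makes the redex-firing case of application descend to a strictly smaller approximation, and with the observation that every approximation occurring in $\phi$ is strictly below $\alpha[\phi]$ handling the call to the simultaneous substitution in the $x^n[\sigma]$ case. The paper gives no more detail than its prose description of the nested induction, so your elaboration is a faithful filling-in of the same proof.
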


\LONGVERSION{
\begin{lemma} \mbox{}
  \label{thm:ssubsts}
  For any term, type or kind $F$,
  \begin{enumerate}
  \item $\msub{\msub\sigma_\phi\rho}_{\phi'}(\msub{\sigma'}_{\phi'} F)
      = \msub\sigma_\phi (\msub\rho_\gamma F)$ where $\sigma' =
      \ctxrestrict{\sigma}{m}\ctxmerge\id(\hat\Gamma^m)$ and $\Phi'
      = (\ctxrestrict{\phi}m \ctxmerge \Gamma^m)$;
    \item $\msub{\sigma}_\phi(\msub{\rho}_\gamma F) =
      \msub{~\ctxrestrict{\sigma}{m} \ctxmerge \msub{\sigma}_\phi\rho}_{\phi'}(F)$ where
    $\Phi'  = (\ctxrestrict{\phi}m \ctxmerge \Gamma^m)$;
  \item $\msub\sigma_\phi([\hat\Gamma.N/x^n] F) = \msub{\sigma \ctxmerge \msub\sigma_\phi(\hat\Gamma.N)/x^n}F$;
  \item $[\sigma]_\phi([\Gammahat^n.N/x^n]F) = [\Gammahat^n.N/x^n]([\sigma]_\phi F)$ if level$(\phi) < n$.
  \end{enumerate}
\end{lemma}
\begin{proof}
  By induction on $F$.
\end{proof}
}

\begin{lemma}[Identity extension]\mbox{}\\
  \label{thm:identity-extension}
  \begin{enumerate}
  \item 
    If $\Psi \der \sigma \chk \Phi$
    then
    $\Psi \ctxmerge x^k{:}[\sigma]_{\Phi}(A[\Gamma^k]) \der
    \rho \chk \Phi \ctxmerge x^k{:}A[\Gamma^k]$
    where $\rho = \sigma/\phi \ctxmerge \ctxvarex x^k/x^k$.
  \item
    If $\Psi \der \sigma \chk \Phi$
    then
    $\Psi \ctxmerge \msub\sigma_\phi \Gamma \der
    \rho \chk \Phi \ctxmerge \Gamma$
    where $\rho = \sigma/\phi \ctxmerge \id(\gamma)$.
  \end{enumerate}
\end{lemma}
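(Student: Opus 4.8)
The plan is to establish part~(1) directly from the substitution-typing rules of Figure~\ref{fig:typing-rules} and then derive part~(2) by iterating it (note that part~(1) is precisely the instance of part~(2) in which $\Gamma$ is the one-element context $x^k{:}A[\Gamma^k]$, so it plays the role of the one-step extension). For part~(1), observe that by the definition of substitution merging the substitution $\rho = \sigma/\phi \ctxmerge \ctxvarex x^k/x^k$ has exactly the same shape as the merged domain $\Phi \ctxmerge x^k{:}A[\Gamma^k]$: the new pair $\ctxvarex x^k/x^k$ is inserted at precisely the sorted position at which the declaration $x^k{:}A[\Gamma^k]$ lands in $\Phi$. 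I would therefore check $\rho$ against this domain by recursion on the structure of the merge. For every pair of $\rho$ inherited from $\sigma$ I reuse the given derivation of $\Psi \der \sigma \chk \Phi$, first weakening its range by the new declaration $x^k{:}[\sigma]_{\Phi}(A[\Gamma^k])$ (Weakening lemma); since merging only reorders independent declarations, the premises of the substitution-typing rules are stable under this reordering. For the single new pair $\ctxvarex x^k/x^k$ I appeal to the rule for $\sigma,\ctxvarex y^k$, whose side condition in this instance reads $\Psi'(x^k) = [\sigma]_{\Phi}(A[\Gamma^k])$ with $\Psi' = \Psi \ctxmerge x^k{:}[\sigma]_{\Phi}(A[\Gamma^k])$; this holds immediately, since lookup of $x^k$ in the merged context $\Psi'$ returns exactly the declared type $[\sigma]_{\Phi}(A[\Gamma^k])$, and only the portion of $\sigma$ over $\ctxrestrict{\Phi}{k}$ is relevant, which is exactly the portion already processed when the recursion reaches the new pair. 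Hence $\Psi' \der \rho \chk \Phi \ctxmerge x^k{:}A[\Gamma^k]$.

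For part~(2) I would induct on $\Gamma$, equivalently on its approximation $\gamma$. When $\Gamma = \cdot$ we have $\rho = \sigma/\phi \ctxmerge \id(\cdot) = \sigma$ and $\msub{\sigma}_{\phi}\cdot = \cdot$, so the claim is the hypothesis. When $\Gamma = \Gamma', y^j{:}B[\Delta^j]$, sortedness of $\Gamma$ forces every declaration of $\Gamma'$ to be at level $\geq j$; the induction hypothesis gives $\Psi \ctxmerge \msub{\sigma}_{\phi}\Gamma' \der \sigma/\phi \ctxmerge \id(\gamma') \chk \Phi \ctxmerge \Gamma'$, and inversion on the well-formedness of $\Phi \ctxmerge \Gamma$ (using Closure under chopping and Well-formed context extension) shows $B[\Delta^j]$ is a legal declaration over $\Phi \ctxmerge \Gamma'$. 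Applying part~(1) to this substitution and that declaration yields a well-typed one-step extension; it then remains to recognize, using associativity and the mergesort-style stability of context and substitution merging (Closure under independent context merging) together with the inertness of $\id(\gamma')$, that $(\Phi \ctxmerge \Gamma') \ctxmerge y^j{:}B[\Delta^j] = \Phi \ctxmerge \Gamma$, that the extended substitution is $\sigma/\phi \ctxmerge \id(\gamma)$, and that $[\sigma/\phi \ctxmerge \id(\gamma')]_{\Phi \ctxmerge \Gamma'}(B[\Delta^j]) = \msub{\sigma}_{\phi}(B[\Delta^j])$, so that the range reads $\Psi \ctxmerge \msub{\sigma}_{\phi}\Gamma$, as required.

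I expect the genuine difficulty to be purely the bookkeeping around merging rather than anything conceptual. The substitution-typing rules are phrased for extension at the right end of a context, whereas $\ctxmerge$ inserts a declaration at its sorted position, so part~(1) needs either a small exchange lemma (substitution typing is invariant under the stable reordering that $\ctxmerge$ performs, mirroring the mergesort stability already recorded for contexts) or the induction carried along the structure of the merge as above, which in turn leans on a companion fact that chopping a well-typed substitution at a level yields a well-typed substitution over the chopped domain. Tightly coupled to this is the supporting lemma that the identity fragment $\id(\gamma)$ of $\rho$ is genuinely inert --- $[\id(\gamma)]$ acts as the identity on well-typed objects over its domain --- which, while routine, is exactly where the definitions of substitution chopping and merging and the resurrected-domain conventions of Section~\ref{sec:sub} must be handled with care. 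Once these two points are in place, the remainder is definitional unfolding together with appeals to Weakening and to the context-merging lemmas already proved.
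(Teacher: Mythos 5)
Your proposal is correct and matches the paper's own argument: part~(1) is proved by working through the cases of the derivation of $\Psi \der \sigma \chk \Phi$ (equivalently, the structure of the merge), reusing the sub-derivations under weakening, splitting on whether the new declaration lands before or after the current one, and discharging the new pair $\ctxvarex x^k/x^k$ via the variable rule's lookup condition; part~(2) is then by induction on $\Gamma$ using part~(1). The supporting facts you isolate --- substitution weakening/exchange under the stable reordering performed by $\ctxmerge$, and the inertness of the $\id(\gamma)$ fragment --- are exactly the auxiliary properties the paper's proof leans on.
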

\LONGVERSION{
\begin{proof}
  \begin{enumerate}
  \item 
  By induction on the derivation of $\Psi\der\sigma\chk\Phi$. Let
  $\Psi' = \Psi \ctxmerge x^k{:}[\sigma]_\Phi(A[\Gamma^k])$. $\Psi'$ is
  a well-formed context by Lemma~\ref{thm:wf-merge}.

  \paragraph{Case: $\Psi\der\edot\chk\edot$:}
  $\sigma = \edot$ so $\Psi'=\Psi\ctxmerge x^k{:}A[\Gamma^k]$ by
  definition. Noting that $\rho = \ctxvarex x^k$ by definition of
  merging of substitutions and $\edot$ is the neutral element of
  merging, we then obtain the result by the following derivation:
  \[
  \infer{\Psi' \der \edot, \ctxvarex x^k \chk \edot, x^k{:}A[\Gamma^k]}
  {\infer{\Psi'\der\edot\chk\edot}{} & \Psi'(x^k) = [\edot]_\edot A[\Gamma^k]}
  \]

  \paragraph{Case: $\Psi\der\sigma,\hat\Gamma^n.M \chk \Phi, y^n{:}B[\Gamma^n]$:} \mbox{} \\
  \[
  \infer{
    \Psi \der \sigma, \hat\Gamma^n.M \chk \Phi, y^n{:}B[\Gamma'^n]}
  {\Psi \der \sigma \chk \Phi
  & \ctxrestrict{\Psi}{n} \ctxmerge [\sigma]_{\phi}(\Gamma'^n) \der M \chk [\sigma']_{\phi'}(B)}
  \]
\mbox{where $\sigma' = \ctxrestrict{\sigma}{n} \ctxmerge \id(\Gamma^n)$ and 
            $\Phi' = \ctxrestrict{\Phi}{n} \ctxmerge \Gamma$}
        \\
let $\sigma_w = \ctxinsert{\sigma}{\ctxvarex{x^k}}$ and $\Phi_w =
\ctxinsert{\Phi}{x^k{:}A[\Gamma^k]}$\\
and similarly, $\sigma'_w = \ctxinsert{\sigma'}{\ctxvarex{x^k}}$ and 
$\Phi'_w = \ctxinsert{\Phi'}{x^k{:}A[\Gamma^k]}$. \\
  $\ctxrestrict{\Psi}{n} \ctxmerge [\sigma_w]_{\phi_w}\Gamma'^n
  \der M \chk [\sigma'_w]_{\phi'_w} B$
  \hfill by substitution weakening \\
  % $\Psi \ctxmerge x^k{:}[\sigma]_\Phi(A[\Gamma^k]) \der \sigma' \chk
  % \Phi'$
  % \hfill by weakening \\
  $\Psi\ctxmerge x^k{:}[\sigma]_{\Phi}(A[\Gamma^k]) \der
  \sigma_w \chk \Phi_w$  \hfill by I.H. \\
  $\Psi\ctxmerge x^k{:}[\sigma, \Gamma'^k.M]_{\Phi_w}(A[\Gamma^k]) \der
  \sigma_w \chk \Phi_w$
  \hfill by substitution weakening \\
  \paragraph{}{\bf Sub-case 1:} $n \leq k$\\
  $\ctxrestrict{\Psi}{n} \ctxmerge [\sigma]_{\phi}(\Gamma'^n) \der M \chk
  [\sigma']_{\phi'}(B)$ \hfill derived previously \\
  $\ctxrestrict{(\ctxinsert{\Psi }{x^k{:}A[\Gamma^k]})}{n}  
   \ctxmerge [\sigma]_{\phi}(\Gamma'^n) \der M \chk
  [\sigma']_{\phi'}(B)$ \hfill be weakening \\
  $\ctxrestrict{(\ctxinsert{\Psi }{x^k{:}A[\Gamma^k]})}{n}  
   \ctxmerge [\sigma_w]_{\phi_w}(\Gamma'^n) \der M \chk
  [\sigma'_w]_{\phi'_w}(B)$ \hfill by substitution weakening\\
  $\ctxinsert{\Psi}{x^k{:}A[\Gamma^k]} \der \sigma_w, \Gammahat'^n.M \chk
   \Phi^w, y^n{:}B[\Gamma']$ \hfill by typing rule\\
  $\ctxinsert{\Psi}{x^k{:}A[\Gamma^k]} \der 
\ctxinsert{(\sigma, \Gammahat'.M)}{\ctxvarex{x^k}}
\chk \ctxinsert{(\Phi, y^n{:}B[\Gamma'])}{x^k{:}A[\Gamma]}$ \\
\mbox{$\quad$}\hfill  by definition since $n \leq k $.

\paragraph{}{\bf Sub-case 2:} $n > k$,\\
We note weakening the premise $\ctxrestrict{\Psi}{n} \ctxmerge
[\sigma]_{\phi}(\Gamma'^n) \der M \chk [\sigma']_{\phi'}(B)$ has no effect. We
use directly the typing rule on the I.H. and the premise to derive
\\
  $\ctxinsert{\Psi}{x^k{:}A[\Gamma^k]} \der \sigma_w, \Gammahat'^n.M \chk
   \Phi^w, y^n{:}B[\Gamma']$ \hfill \\% by typing rule\\
  $\ctxinsert{\Psi}{x^k{:}A[\Gamma^k]} \der 
\ctxinsert{(\sigma, \Gammahat'.M)}{\ctxvarex{x^k}}
\chk \ctxinsert{(\Phi, y^n{:}B[\Gamma'])}{x^k{:}A[\Gamma]}$ \\
\mbox{$\quad$}\hfill  by definition since $n \leq k $.

  \paragraph{Case $\Delta\der\sigma',\ctxvarex y^n \chk \Psi',
    y^n{:}B[\Gamma'^n]$:} \mbox{} \\
  \[
  \infer{
    \Psi \der \sigma, y^n \chk \Phi, y^n{:}B[\Gamma'^n]}
  {\Psi \der \sigma \chk \Phi
  & \Psi(y^n) = [\sigma]_{\phi}(B[\Gamma'])}
  \]
let $\sigma_w = \ctxinsert{\sigma}{\ctxvarex{x^k}}$ and $\Phi_w =
\ctxinsert{\Phi}{x^k{:}A[\Gamma^k]}$\\
$\ctxinsert{\Psi}{x^k{:}[\sigma]A[\Gamma^k]} \der \sigma_w \chk \Phi_w$ \hfill by I.H.\\
$\ctxinsert{\Psi}{x^k{:}[\sigma_w]A[\Gamma^k]} \der \sigma_w \chk \Phi_w$ \hfill
by substitution weakening\\
$(\ctxinsert{\Psi}{x^k{:}[\sigma_w]A[\Gamma^k]})(y^n) =
[\sigma]_{\phi}(B[\Gamma'])$ \hfill   since $\Psi(y^n) = [\sigma]_{\phi}(B[\Gamma'])$\\
$[\sigma]_{\phi}(B[\Gamma']) = [\sigma_w]_{\phi_w}(B[\Gamma'])$ \hfill by
properties of substitution \\ 
$\ctxinsert{\Psi}{x^k{:}[\sigma_w]A[\Gamma^k]} \der \sigma_w, y^n\chk \Phi_w,
y^n{:}B[\Gamma']$ \hfill by typing rule \\
$\ctxinsert{\Psi}{x^k{:}[\sigma_w]A[\Gamma^k]} \der \ctxinsert{(\sigma,
  y^n)}{x^k} \chk \ctxinsert{(\Phi, y^n{:}B[\Gamma'])}{x^k{:}A[\Gamma]}$ \hfill by definition

  \item By induction on $\Gamma$, using part 1.
  \end{enumerate}
\end{proof}
}

\begin{lemma}[Substitution property] \mbox{}%\\
  \begin{enumerate}
  \item
    If $\ctxrestrict\Delta{n} \ctxmerge \Psi^n \der \sigma \chk \Phi^n$
    and $\ctxrestrict\Delta{n} \ctxmerge \Phi^n \der J$
    then $\ctxrestrict\Delta{n} \ctxmerge \Psi^n \der
    \msub{\sigma}_\phi (J)$.
  \item
    If $\ctxrestrict\Psi{n} \ctxmerge \Phi^n \entails N \chk A$
    and $\Psi, x^n{:}A[\Phi^n] \entails J$
    then $\Psi \entails \msub{\Phihat^n.N/x^n}_{\alpha[\phi]} (J)$.
  \end{enumerate}
  Note that in the case of type synthesis of terms, the conclusion of
  the statements depend on the result of the substitution, e.g.:
  \begin{enumerate}[(a)]
  \item $\ctxrestrict\Delta{n} \ctxmerge \Psi^n \der
     R' \syn \msub\sigma_\phi C$ if $\msub\sigma_\phi R = R'$.
  \item $\ctxrestrict\Delta{n} \ctxmerge \Psi^n \der
 M \chk \msub\sigma_\phi C$ if $\msub\sigma_\phi
    R = M : \alpha$ where $\alpha = \erasex{(\msub\sigma_\phi C)}$.
  \end{enumerate}
\end{lemma}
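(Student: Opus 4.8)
The plan is to prove parts~1 and~2, together with the refinements (a) and (b), \emph{simultaneously}, by well-founded induction on the lexicographically ordered pair consisting of: (i) the approximate index of the substitution --- $\phi$ for part~1, $\alpha[\phi]$ for part~2 --- ordered by the proper-subexpression order $<$ on type and context approximations; and (ii) the structure of the given typing derivation of $J$. Within a fixed index I would recurse on subderivations. The first component of the measure is needed because hereditary substitution is not purely structural: it spawns a fresh substitution at a \emph{strictly smaller} approximate index in exactly two situations --- when a created redex is contracted, i.e.\ $\msub{\Phihat^n.N/x^n}_{\alpha[\phi]}(\app{R}{(\Psihat^k.M)})$ with $\msub{\Phihat^n.N/x^n}_{\alpha[\phi]}(R) = \lam{y^k}{N'}:\gamma[\psi]\darrow\beta$, where the Termination lemma gives $\gamma[\psi]\darrow\beta \le \alpha[\phi]$ and hence $\gamma[\psi] < \alpha[\phi]$; and when the head meta-variable $x^n$ of type $A[\Phi^n]$ is reached and its postponed substitution $\tau$ must be pushed into $N$, an operation indexed by $\phi = \erasex{(\Phi^n)}$, a proper subexpression of $\alpha[\phi] = \erasex{(A)}[\phi]$. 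In both situations the induction hypothesis is available at the smaller index on an arbitrary derivation, which is what makes the argument go through even though the body $N'$ produced in the redex case is not a subterm of $J$.

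Before the case analysis I would record two bookkeeping facts. First, in a well-formed $\Psi, x^n{:}A[\Phi^n]$ both $A$ and $\Phi^n$ are formed in $\ctxrestrict{\Psi}{n}$ and therefore contain no occurrence of $x^n$; so $\msub{\Phihat^n.N/x^n}$ is the identity on them, and in the head-variable case only the postponed substitution carried by $x^n$ alters the synthesized type. Second, since contexts are ordered, $\Psi$ splits as $\ctxrestrict{\Psi}{n}\ctxmerge\Psi(<n)$, which is what lets part~1 be instantiated to rearrange $\Psi$ when the head-variable case of part~2 calls for it. Throughout, the real work is done by the commutation identities relating $[\cdot]$, chopping $\ctxrestrict{\cdot}{n}$, merging $\ctxmerge$ and $\id(\cdot)$ (Lemma~\ref{thm:ssubsts}) together with the identity-extension Lemma~\ref{thm:identity-extension}; these are used repeatedly to realign the chopped and merged contexts and the type annotations that substitutions carry.

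The structural cases --- $J$ ending with the $\lambda$-rule, a non-head application, a constant, context well-formedness, or a substitution rule ($\edot$, $\sigma,\hat\Gamma^k.M$, or $\sigma,\ctxvarex{y^k}$ with $y^k\neq x^n$) --- are uniform: hereditary substitution commutes with the constructor, so I would apply the induction hypothesis to each premise at the same index and reassemble with the same typing rule, the only subtlety being the level side-condition ($k<n$ versus $k\ge n$) under a binder, which matches exactly the case split in the definition of the substitution. For the head-variable case $J = x^n[\tau]$ of part~2, with $\tau' = \msub{\Phihat^n.N/x^n}_{\alpha[\phi]}(\tau)$ and result $[\tau']_\phi N : \alpha$, I would use the induction hypothesis on the substitution premise to get $\Psi \der \tau' \chk \Phi^n$, and then, since $\phi < \alpha[\phi]$, apply part~1 at index $\phi$ to the hypothesis $\ctxrestrict{\Psi}{n}\ctxmerge\Phi^n \der N \chk A$, obtaining $\Psi \der [\tau']_\phi N \chk [\tau']_\phi A$; by the first bookkeeping fact $[\tau']_\phi A$ is exactly the image of the synthesized type $[\tau]_{\Phi^n}A$ under $\msub{\Phihat^n.N/x^n}$, so the type is as demanded by clause~(b). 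The head case of part~1 --- $\sigma$ mapping the head $y^k$ to $\hat\Gamma^k.M$ --- is dual, appealing to part~1 again at the smaller index $\erasex{(\Psi'^k)}$, where $\Psi'^k$ is the declared context of $y^k$. In the redex case of part~2, the induction hypothesis (refinement~(b)) on the first premise says $\lam{y^k}{N'}$ checks against the substituted function type, so inversion on the $\lambda$-rule gives $\Psi\ctxmerge y^k{:}B'[\Psi''^k]\der N'\chk C'$; the induction hypothesis on the second premise gives the substituted argument $M'$ well-typed against $B'$ in the appropriate chopped and merged context; and applying part~2 at the strictly smaller index $\gamma[\psi]$ to the derivation of $N'$ yields $\Psi \der [\Psihat^k.M'/y^k]_{\gamma[\psi]}(N')$ well-typed, with the result type matching via a commutation identity that composes the two substitutions. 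The $k>n$ sub-case is identical except that $M$ is not itself substituted into.

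I expect the main obstacle to be not any individual case but the uniform \emph{context bookkeeping}: for each typing rule one must verify that the chopped and merged contexts occurring in the premises --- patterns such as $\ctxinsert{\ctxrestrict{(\Psi,x^n{:}A[\Phi^n])}{k}}{\Psi'^k}$ or $\ctxinsert{\ctxrestrict{\Psi}{k}}{[\sigma]_{\Phi^n}(\Gamma^k)}$ --- and the restricted/extended substitution annotations (the $\ctxinsert{\ctxrestrict{\sigma}{k}}{\id(\Gammahat^k)}$ pattern) transform, after the substitution and an application of Lemma~\ref{thm:ssubsts}, into precisely the contexts and annotations needed to reapply the rule at the conclusion. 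Closely related, and the most error-prone point, is checking in the redex case that the scoping side-conditions governing whether $M$ is pushed into ($k\le n$ versus $k>n$) are consistent between the definition of hereditary substitution and the shape of the typing derivation of $J$. Once the induction measure is fixed and those commutation identities are in hand, every remaining step is routine.
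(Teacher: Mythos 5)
Your proposal follows essentially the same route as the paper: a simultaneous induction over both parts (and the refinements (a)/(b)) with a lexicographic measure combining the substitution's approximate index and the typing derivation, dispatching the structural cases by commutation of hereditary substitution with each rule, and handling the head-variable and redex cases by invoking the induction hypothesis at a strictly smaller approximation (justified by the Termination lemma), with the substitution-composition and identity-extension lemmas doing the context bookkeeping. The only cosmetic difference is that you place the approximation before the derivation in the lexicographic order and absorb the paper's separate `level' component into the subexpression order on approximations, which in fact matches how the induction hypothesis is actually invoked in the paper's cases, so nothing of substance changes.
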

\LONGVERSION{
  \begin{proof}
    We prove the two parts simultaneously, by lexicographic induction
    on the length of the derivation of the second judgement, the level, the type
    approximation $\alpha$, the judgement approximation $\phi$. We
    detail the proof for checking judgements $J = M \chk C$ and
    synthesizing judgements $J = R \syn C $ only, and even then only
    for the non-trivial cases; the proofs for types and substitutions
    are similar. 
    \begin{enumerate}
    \item \strut
      \paragraph{}{\bf Case: $\vcenter{\infer{\strut{}\ctxrestrict\Delta{n}
          \ctxmerge \Phi^n \der x^m[\rho] \syn \msub\rho_\gamma A}
        {(\ctxrestrict\Delta{n} \ctxmerge \Phi^n)(x^m) = A[\Gamma^m] &
          \ctxrestrict\Delta{n} \ctxmerge \Phi^n \der \rho \chk
          \Gamma^m}}$} \mbox{} \\
      $\ctxrestrict\Delta{n} \ctxmerge \Psi^n 
          \der \sigma \chk \Phi^n$ \hfill by assumption \\
      $\ctxrestrict\Delta{n} \ctxmerge \Psi^n 
           \der \msub\sigma_\phi\rho \chk \msub\sigma_\phi\Gamma^m$    \hfill by I.H. \\
      \paragraph{}{\bf Sub-case 1:~}{$x^m \in \ctxrestrict\Delta{n}$ and $m \geq n$}
      \mbox{} \\
      $(\ctxrestrict\Delta{n})(x^m) = A[\Gamma^m]$ \hfill by assumption \\
      $(\ctxinsert{\ctxrestrict\Delta{n}}{\Psi^n})(x^m) = A[\Gamma^m]$ \hfill by definition since $n \leq m$ \\
      $\msub{\sigma}_\phi(\Gamma^m) = \Gamma$ \hfill since $m \geq n = $level($\phi$) \\
      $\ctxrestrict\Delta{n} \ctxmerge \Psi^n
           \der \msub\sigma_\phi\rho \chk \Gamma^m$    \hfill by rewriting in the I.H. \\
      $\ctxrestrict\Delta{n} \ctxmerge \Psi^n \der
      x^m[\msub\sigma_\phi\rho] \syn \msub{\msub\sigma_\phi\rho}_\gamma A$
      \hfill by typing rule\\
      $\ctxrestrict\Delta{n} \ctxmerge \Psi^n \der
      \msub{\sigma}_\phi(x^m[\rho]) \syn \msub\sigma_\phi(\msub{\rho}_\gamma A)$  \hfill by definition

      \paragraph{}{\bf Sub-case 2:~}{$x^m \in \Phi^n$ and $m < n$}
      \mbox{} \\
      $(\Phi^n)(x^m) = A[\Gamma^m]$ \hfill by assumption \\
      $\Phi^n = \Phi_1, x^m{:}A[\Gamma^m],~\Phi_2$ \hfill by previous line \\
      $\ctxrestrict\Delta{n} \ctxmerge \Psi^n 
          \der \sigma \chk \Phi^n$ \hfill recall by assumption \\
      $\ctxrestrict\Delta{n} \ctxmerge \Psi^n 
          \der \sigma_1 \chk \Phi_1$ \hfill for some $\sigma_1/\phi_1 \subseteq \sigma/\phi$ by typing rule for $\sigma$ \\ 
      Either $\msub\sigma_\phi (x^m[\rho])= y^m[\msub\sigma_\phi\rho]$
      or
      $\msub\sigma_\phi (x^m[\rho]) = \msub{\msub\sigma_\phi\rho}_\phi M
      : \alpha'$.

      \paragraph{}{\bf Sub-case 2.1:~}{$\msub\sigma_\phi (x^m[\rho])=
        y^m[\msub\sigma_\phi\rho]$} \mbox{} \\
      $(\Psi^n)(y^m) = \msub{\sigma_1}_{\phi_1}(A[\Gamma^m])$ \hfill by typing rules \\
      $(\Psi^n)(y^m) = \msub{\sigma}_\phi(A[\Gamma^m])$      \hfill by substitution weakening \\
      $(\Psi^n)(y^m) = ([\sigma']_{\phi'} A)[[\sigma]_{\phi^n}\Gamma^m]$  \hfill by definition \\
      \mbox{$\quad$} \hfill where $\sigma' =  \ctxrestrict{\sigma}{m}\ctxmerge\id(\hat\Gamma^m)$ and
      $\Phi' = (\ctxrestrict{\Phi}m \ctxmerge \Gamma^m)$ \\
      $\ctxrestrict\Delta{n} \ctxmerge \Psi^n 
           \der \msub\sigma_\phi\rho \chk \msub\sigma_\phi\Gamma^m$    \hfill recall the I.H. \\
      $\ctxinsert{\ctxrestrict{\Delta}{n}}{\Psi^n} \der y^m[\msub{\sigma}_\phi \rho] 
         \syn [\msub{\sigma}_\phi \rho]_\gamma([\sigma']_{\phi'}A)$ \hfill by typing rule \\
      $\ctxinsert{\ctxrestrict{\Delta}{n}}{\Psi^n} \der   (y^m[\msub{\sigma}_\phi\rho] )
         \syn \msub{\sigma}_\phi ([\rho]_\gamma A)$ \hfill by lemma \ref{thm:ssubsts}\\    
%      $(\Psi^n)(x^m) =
 %     (\msub{\sigma'}_{\phi'} A)[\msub{\sigma}_{\phi}\Gamma^m]$
  %    \hfill by definition \\

      \subparagraph{}{\bf Sub-case 2.2:~}{$\msub\sigma_\phi (x^m[\rho]) = \msub{\msub\sigma_\phi\rho}_\phi M
      : \alpha'$}
    \mbox{} \\
      $\ctxrestrict{(\ctxrestrict\Delta{n} \ctxmerge \Psi^n)}{m} \ctxmerge \msub{\sigma_1}_{\phi_1}\Gamma^m 
           \der M \chk   \msub{\sigma_1'}_{\phi_1'} A$
           \hfill for some $\sigma_1/\phi_1 \subseteq \sigma/\phi$ by typing
                  rule for $\sigma$ \\
      \mbox{$\quad$} \hfill where $\sigma_1' =
      \ctxrestrict{\sigma_1}{m} \ctxmerge\id(\hat\Gamma^m)$ and $\Phi_1'
      = (\ctxrestrict{\Phi_1}m \ctxmerge \Gamma^m)$ \\
      $\ctxrestrict{(\ctxrestrict\Delta{n} \ctxmerge \Psi^n)}{m}
      \ctxmerge \msub{\sigma}_{\phi}\Gamma^m \der M \chk
      \msub{\sigma'}_{\phi'} A$
      \hfill by substitution weakening \\
      \mbox{$\quad$} \hfill where $\sigma' =
      \ctxrestrict{\sigma}{m} \ctxmerge\id(\hat\Gamma^m)$ and $\Phi'
      = (\ctxrestrict{\Phi}m \ctxmerge \Gamma^m)$ \\
      $\ctxrestrict\Delta{n} \ctxmerge \Psi^n 
           \der \msub\sigma_\phi\rho \chk \msub\sigma_\phi\Gamma^m$    \hfill recall previously derived by I.H. \\
      $\ctxrestrict{(\ctxrestrict\Delta{n} \ctxmerge \Psi^n)}{m}
      \der
      \msub{\msub\sigma_\phi\rho}_{\gamma} M \chk
      \msub{\msub\sigma_\phi\rho}_{\gamma} (\msub{\sigma'}_{\phi'} A)$
      \hfill by I.H. since level$(\Gamma) = m < n$\\
      $\ctxrestrict{(\ctxrestrict\Delta{n} \ctxmerge \Psi^n)}{m}
      \der
      \msub{\msub\sigma_\phi\rho}_{\gamma} M 
         \chk \msub{\sigma}_\phi ([\rho]_\gamma A)$ \hfill by lemma \ref{thm:ssubsts}\\    

      \paragraph{}{\bf Case: $\vcenter{\infer{\strut{}\ctxrestrict\Delta{n} \ctxmerge \Phi^n \der \app R{(\hat\Gamma^m.N)} \syn [\hat\Gamma^m.N/x^m]_{A[\Gamma^m]}B}{
          \ctxrestrict\Delta{n} \ctxmerge \Phi^n \der R \syn
          \fun{x^m{:}A[\Gamma^m]}B & \ctxrestrict\Delta{n}
            \ctxmerge \ctxrestrict{\Phi^n}m \ctxmerge \Gamma^m \der N \chk A}}$}

      \paragraph{}{\bf Sub-case 1:~}{$\msub\sigma_\phi R = R'$ is atomic} \mbox{}\\
      $\ctxrestrict\Delta{n} \ctxmerge \Psi^n \der R' \syn \msub\sigma_\phi (\fun{x^m{:}A[\Gamma^m]}B)$
      \hfill by I.H. 

      \paragraph{}{\bf Sub-case 1.1:~}{$m < n$} \mbox{} \\
      $\ctxrestrict\Delta{n} \ctxmerge \Psi^n \der R' \syn
      \fun{x^m{:}{\msub\sigma_\phi(A[\Gamma^m])}}{\msub{\sigma'}_{\phi'} B}$
      \hfill by definition  \\
      \mbox{$\quad$} \hfill where $\sigma' = \ctxinsert{\sigma}{\ctxvarex{x^m}}$ and $\phi' = \ctxinsert{\phi}{x^m{:}\alpha[\gamma]}$\\
      $\ctxinsert{\ctxrestrict{\Delta}{n}}{\Psi^n} \der \ctxrestrict{\sigma}{m} \chk \ctxrestrict{\Phi^n}{m}$ \hfill by typing rule \\       
        $\ctxrestrict\Delta{n}
        \ctxmerge \ctxrestrict{\Psi^n}m \ctxmerge \msub\sigma_\phi(\Gamma^m)\der \tau \chk \Phi^n_0$
      \hfill by Lemma~\ref{thm:identity-extension} \\
      \mbox{$\quad$} \hfill where $\tau = \ctxrestrict{\sigma}{m}
      \ctxmerge \id(\Gamma^m)$ and $\Phi^n_0 = \ctxrestrict{\Phi^n}{m} \ctxmerge \Gamma^m$
      \\
      \mbox{$\quad$} \hfill and noting that $\msub{\sigma}_\phi(\Gamma^m) = \msub{\ctxrestrict{\sigma}{m}}_{\ctxrestrict{\phi}{m}}(\Gamma^m)$. \\
      $\ctxinsert{\ctxrestrict{\Delta}{n}}
                {\ctxinsert{\ctxrestrict{\Psi^n}{m}}{\msub{\sigma}_\phi(\Gamma^m)}}
        \der [\tau]_{\phi_0}N \chk [\tau]_{\phi_0}A$ \hfill by I.H.\\
      $\ctxinsert{\ctxrestrict{\Delta}{n}}{\Psi^n} \der
          R'\;(\Gammahat^m.[\tau]_{\phi_0}N) \syn \msub{\Gammahat^m.[\tau]_{\phi_0}N/x^m}_{\alpha[\gamma]}(\msub{\sigma'}_{\phi,x^m{:}\alpha[\gamma]} B)$ \hfill by typing rule\\
      $\ctxinsert{\ctxrestrict{\Delta}{n}}{\Psi^n} \der
          R'\;(\Gammahat^m.[\tau]_{\phi_0}N) \syn \msub{\sigma}_{\phi}(\msub{\Gammahat^m.N/x^m}_{\alpha[\gamma]}B$ \hfill by Lemma~\ref{thm:ssubsts}\\
      $\ctxinsert{\ctxrestrict{\Delta}{n}}{\Psi^n} \der
          [\sigma]_\phi(R\;(\Gammahat^m.N) \syn \msub{\sigma}_{\phi}(\msub{\Gammahat^m.N/x^m}_{\alpha[\gamma]}B$ \hfill by definition of substitution

      \paragraph{}{\bf Sub-case 1.2:~}{$m \ge n$} \mbox{} \\
      $\ctxrestrict\Delta{n} \ctxmerge \Psi^n \der R' \syn
      \fun{x^m{:}{(A[\Gamma^m])}}{\msub\sigma_\phi B}$
      \hfill by definition \\
      $\ctxinsert{\ctxrestrict{\Delta}{n}}{\Psi^n} \der
          R'\;(\Gammahat^m.N) \syn \msub{\Gammahat^m.N/x^m}_{\alpha[\gamma]}(\msub{\sigma}_\phi B$ \hfill by typing rule \\
      $\ctxinsert{\ctxrestrict{\Delta}{n}}{\Psi^n} \der
          [\sigma]_\phi(R\;(\Gammahat^m.N))\syn (\msub{\sigma}_\phi
          (\msub{\Gammahat^m.N/x^m}_{\alpha[\gamma]} B)$  \hfill by
          definition and Lemma~\ref{thm:ssubsts}

      \paragraph{}{\bf Sub-case 2:~}{$\msub\sigma_\phi R = \lam{y^k}M : \alpha[\gamma] \Rightarrow \beta$ is
        normal} \mbox{} \\
      $\ctxrestrict\Delta{n} \ctxmerge \Psi^n \der
       \lam{y^k}M \chk \msub\sigma_\phi(\fun{x^m{:}A[\Gamma^m]}B)$
      \hfill by I.H.

      \paragraph{}{\bf Sub-case 2.1:~}{$m < n$} \mbox{} \\
      $\ctxrestrict\Delta{n} \ctxmerge \Psi^n \der
       \lam{y^k}M \chk
      \fun{x^m{:}{\msub\sigma_\phi(A[\Gamma^m])}}{\msub{\sigma'}_{\phi'} B}$
      \hfill by definition  \\
      \mbox{$\quad$} \hfill where $\sigma' = \ctxinsert{\sigma}{\ctxvarex{x^m}}$ and $\phi' = \ctxinsert{\phi}{x^m{:}\alpha[\gamma]}$\\
      $\ctxinsert{\ctxrestrict{\Delta}{n}}{\Psi^n} \der \ctxrestrict{\sigma}{m} \chk \ctxrestrict{\Phi^n}{m}$ \hfill by typing rule \\     
      $\ctxrestrict{\Delta}{n} \ctxmerge \Psi^n \ctxmerge y^m{:}A[\Gamma^m] \der M \chk \msub{\sigma'}_{\phi'}B$ and $k = m$ \hfill by inversion on typing \\
        $\ctxrestrict\Delta{n}
        \ctxmerge \ctxrestrict{\Psi^n}m \ctxmerge \msub\sigma_\phi(\Gamma^m)\der \tau \chk \Phi'^n$
      \hfill by Lemma~\ref{thm:identity-extension} \\
      \mbox{$\quad$} \hfill where $\tau = \ctxrestrict{\sigma}{m}
      \ctxmerge \id(\Gamma^m)$ and $\Phi_0^n = \ctxrestrict{\Phi^n}{m} \ctxmerge \Gamma^m$
      \\
      \mbox{$\quad$} \hfill and noting that $\msub{\sigma}_\phi(\Gamma^m) = \msub{\ctxrestrict{\sigma}{m}}_{\ctxrestrict{\phi}{m}}(\Gamma^m)$. \\
      $\ctxinsert{\ctxrestrict{\Delta}{n}}
                {\ctxinsert{\ctxrestrict{\Psi^n}{m}}{\msub{\sigma}_\phi(\Gamma^m)}}
        \der [\tau]_{\phi'}N \chk [\tau]_{\phi_0}A$ \hfill by I.H.\\
        $\ctxrestrict{\Delta}{n} \ctxmerge \Psi^n \der [\Gammahat^m.[\tau]_{\phi'}N / x^m]_{\alpha[\gamma]}M 
           \chk [\Gammahat^m.[\tau]_{\phi'}N / x^m]_{\alpha[\gamma]}([\sigma']_{\phi'} B)$ \hfill by I.H. (since $\alpha[\gamma] < \phi$) \\
           \mbox{$\quad$} \hfill and noting that $\Psi^n$ does not depend on $y^m$ \\
        $\ctxrestrict{\Delta}{n} \ctxmerge \Psi^n \der [\Gammahat^m.[\tau]_{\phi'}N / x^m]_{\alpha[\gamma]}M 
           \chk [\sigma]_{\phi}([\Gammahat^m.N / x^m]_{\alpha[\gamma]} B)$\hfill by Lemma~\ref{thm:ssubsts}\\
        $\ctxrestrict{\Delta}{n} \ctxmerge \Psi^n \der 
           [\sigma]_\phi(R\;(\Gammahat^m.N))            \chk
           [\sigma]_{\phi}([\Gammahat^m.N / x^m]_{\alpha[\gamma]} B)$ \hfill by definition

      \paragraph{}{\bf Sub-case 2.2:~}{$m \ge n$} \mbox{} \\
      $\ctxrestrict\Delta{n} \ctxmerge \Psi^n \der
       \lam{y^k}M \chk
      \fun{x^m{:}{(A[\Gamma^m])}}{\msub\sigma_\phi B}$ \hfill by definition of substitution \\
      $\ctxrestrict\Delta{n} \ctxmerge \Psi^n \ctxmerge x^m{:}A[\Gamma^m] \der
      M \chk \msub\sigma_\phi B$ and $k = m$
      \hfill by inversion \\
      $\ctxrestrict\Delta{n} \ctxmerge \Psi^n \der
      [\hat\Gamma. N / x^m]_{\alpha[\gamma]} M \chk [\hat\Gamma. N / x^m]_{\alpha[\gamma]} (\msub\sigma_\phi B)$
      \hfill by I.H. (since $\alpha[\gamma] < \phi$) \\
      $\ctxrestrict\Delta{n} \ctxmerge \Psi^n \der
      [\hat\Gamma. N / x^m]_{\alpha[\gamma]} M \chk  \msub\sigma_\phi ([\hat\Gamma. N / x^m]_{\alpha[\gamma]} B)$
      \hfill by Lemma~\ref{thm:ssubsts}\\
      $\ctxrestrict{\Delta}{n} \ctxmerge \Psi^n \der 
           [\sigma]_\phi(R\;(\Gammahat^m.N))            \chk
           [\sigma]_{\phi}([\Gammahat^m.N / x^m]_{\alpha[\gamma]} B)$ \hfill by definition

      \paragraph{}{\bf Case $\vcenter{\infer{\strut{}\ctxrestrict\Delta{n} \ctxmerge \Phi^n \der \lam{x^m}M \chk \fun{x^m{:}A[\Gamma^m]}B}
        {\ctxinsert{\ctxrestrict\Delta{n} \ctxmerge
            \Phi^n}{x^m{:}A[\Gamma^m]} \der M \chk B}}$:} \mbox{}\\

      $\ctxrestrict{\Delta}n \ctxmerge \Psi^n
      \ctxmerge x^m{:}[\sigma]_{\phi}(A[\Gamma^m])
      \der \sigma' \chk \Phi'^n$
      \hfill by Lemma~\ref{thm:identity-extension} \\
      \mbox{$\quad$} where $\sigma' = \sigma \ctxmerge \ctxvarex x^m$
      and $\Phi'^n = \Phi^n \ctxmerge x^m{:}A[\Gamma^m]$ \\
      $\ctxinsert{\ctxrestrict\Delta{n} \ctxmerge
            \Psi^n}{x^m{:}\msub\sigma_\phi(A[\Gamma^m])} \der \msub{\sigma'}_{\phi'} M \chk \msub{\sigma'}_{\phi'} B$
      \hfill by I.H. \\
      $\ctxrestrict\Delta{n} \ctxmerge \Psi^n \vdash
      \lam{x^m}{\msub\sigma_\phi M} \chk \fun{x^m{:}\msub\sigma_\phi
        A[\Gamma^m]}{\msub\sigma_\phi B}$
      \hfill by typing rule \\
      $\ctxrestrict\Delta{n} \ctxmerge \Psi^n \vdash
      \msub\sigma_\phi \lam{x^m}{M} \chk \msub\sigma_\phi \fun{x^m{:}A[\Gamma^m]}{B}$
      \hfill by definition

    \item Similarly as for simultaneous substitutions.
    \end{enumerate}
  \end{proof}
}

The typing judgments are syntax-directed and therefore clearly decidable.
Hereditary substitution always terminates, giving us a decision procedure
for dependent typing.
\begin{theorem}[Decidability of Type Checking] \mbox{}\newline
 All judgments in the dependent contextual modal type theory are decidable.
\end{theorem}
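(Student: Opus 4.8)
The key observation is that all the judgments of Figure~\ref{fig:typing-rules} are \emph{syntax-directed}: for a fixed judgment form with a fixed principal subject ($M$, $R$, $\sigma$, $A$, or $\Phi$), at most one rule applies, and which one is determined by the outermost shape of that subject (with at most a consistency check on the remaining inputs). Reading the rules as a recursive procedure, deciding a judgment therefore reduces to: (i) a bounded number of recursive calls on the premises; (ii) a bounded number of lookups in the finite signature $\Sigma$ and the finite context $\Psi$; (iii) applications of the context operations $\ctxrestrict{\cdot}{n}$, $\ctxinsert{\cdot}{\cdot}$, their substitution analogues, and $\id(\hat\Gamma)$; (iv) applications of hereditary substitution to types, terms, substitutions and contexts; and (v) the structural equality test $P = Q$ of the coercion rules. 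It thus suffices to show that each of (i)--(v) terminates and that the recursion as a whole is well-founded.

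Items (ii)--(v) are the easy ingredients. Lookups are finite searches. Chopping and merging (and $\id$) are defined by structural recursion on finite contexts and substitutions, hence terminate, and they preserve well-formedness by the context lemmas proved above. Every hereditary substitution returns or fails after finitely many steps by the Termination Lemma. Finally, because every type and term the algorithm ever inspects is in canonical ($\beta$-normal, $\eta$-long) form, $P = Q$ is decided by a direct structural traversal; the only subtlety, as already observed when the typing rules were presented, is that the substitutions attached to atomic subterms must be compared up to $\eta$-contraction, which is itself a terminating syntactic operation.

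For the well-foundedness of (i), take as termination measure the lexicographic pair $(\,|{\cdot}|,\ \mathrm{mode}\,)$, where $|{\cdot}|$ is the size of the principal subject and $\syn$ is ranked below $\chk$ and $\jctx$. In every rule, each premise that is itself a typing judgment has as its principal subject a proper subexpression of the subject of the conclusion --- all of $A$, $B$, $\Phi^n$, $\Gamma^k$, $N$, $M$, $R$, $\sigma$, $P$ occurring in premises are structurally smaller than the object in the conclusion --- \emph{except} the coercion rules $R \chk Q$ and $P \chk \lftype$, where the subject is unchanged but the mode strictly drops from $\chk$ to $\syn$. What matters here is that the contexts and types appearing in \emph{non-principal} position in the premises --- such as the computed type $[\sigma']_{\ctxinsert{\ctxrestrict{\Phi^n}{k}}{\Gamma^k}}A$ in the substitution rule, or an enlarged ambient context $\ctxinsert{\ctxrestrict{\Psi}{n}}{\Phi^n}$ --- may well be larger than anything in the conclusion, but they do not enter the measure: checking is driven entirely by the term being checked and never by the type against which it is checked, and synthesis is driven by the subject alone. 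Hence the measure strictly decreases along every edge of the call graph, and the procedure halts.

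The step I expect to be the genuine obstacle is controlling this interaction between typing and hereditary substitution --- one must be certain that no loop can arise from typing invoking hereditary substitution, or from checking a term against a type that hereditary substitution has just produced and that may be syntactically larger. It is resolved by two facts already available: hereditary substitution is defined purely syntactically and never calls back into any typing judgment, and its termination (the Termination Lemma) holds independently of typing; so the two recursions compose into a single well-founded one. Since moreover the Substitution property guarantees that a synthesized type is again canonical, the standing preconditions of the rules (well-formed $\Psi$, canonical subject types) are preserved by the recursive calls. Decidability of every judgment follows.
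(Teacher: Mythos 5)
Your proof is correct and follows essentially the same route as the paper, whose entire argument is the two-sentence observation that the judgments are syntax-directed and that hereditary substitution terminates; you have simply made explicit the termination measure (size of the principal subject, with $\syn$ ranked below $\chk$) and the termination of the auxiliary operations that the paper leaves implicit. No discrepancy to report.
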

\LONGVERSION{
\begin{proof}
 The typing judgments are syntax-directed and therefore clearly decidable.
Hereditary substitution always terminates, giving us a decision procedure
for dependent typing.
\end{proof}
}

%%% Local Variables:
%%% mode: latex
%%% TeX-master: "paper"
%%% End:

%\input{examples}
\LONGVERSION{
\input{properties}
}
\section{Related Work}

\paragraph{Multi-level logics of contexts}
Contextual reasoning has been extensively studied for various applications in
AI. For example, Giunchiglia {\textit{et~al}} have explored contextual reasoning
\cite{Giunchiglia93} and have investigated a multi-language hierarchical logic
where we have an infinite level of multiple distinct languages.

In \cite{GiunchigliaSerafini:AI94} they introduce a class of multi-language
systems which use a hierarchy of first-order languages, each language containing
names of the language below. Any two adjacent languages in the
hierarchy are linked only by two bridge rules. The hierarchy is understood as an
alternative to extending modal logics with new modalities. Their goal is to
provide a foundation to the implementation of ``intelligent'' reasoning
systems. As the authors observe, we may use a different system to reason about a
object logic (which may rely on induction) vs reasoning within a given object
logic.
%Oyster/Clam uses for
%example a declarative meta-level and its theory is distinct from the object
%theory and has a different language. In ML system, 
Indexes encode information of the ``locality of the reasoning'', where the
reasoning take place. This is similar in spirit to our use of level
annotations on variables.

% MK was conceived as a basic propositional system for the formalization of
% theorem proving with meta-theories. In meta-theoretic reasoning, one usually
% starts with the object theory and then defines the meta-theory, its
% meta-meta-theory and so on. Analosouly, in MK, the bottom theory is any object
% theory, the theory at level 1 is its meta-theory, the theory at level 2 is its
% meta-meta-theory and so on.  MK features a similar restriction on ``local
% assumptions'' vs ``global assumptions'' ; their $\cal{R}_{{up}.i}$ rule for
% example can be applied only if the index of every undischarged assumption
% $<A.i>$ depends on, is strictly greater than $i$. 

%\cite{Giunchiglia:META92} possible foundation for meta-reasoning. reflection
%rules are used to deductively generate a meta-theory from its object theory
%hierarchical logic, namely a formal system generating an entire meta-hierarchy.

\paragraph{Multi-level meta-variables}
We motivated the multi-level system in the introduction with the need to model
the dependency of holes on bound variables as well as meta-variables. This
naturally leads to a multi-level system. This idea has played an important role
in Sato et al \cite{Sato:CSL03} where the authors develop a multi-level calculus
for meta-variables. As in our work, variables carry an index to indicate whether
they are a bound variable, meta-variable, or a meta$^2$-variable, etc.
The main difference compared to our work is that the authors define a
``textual'' substitution which allows capture. This is unlike our
capture-avoiding substitution operation. There are two main obstacles with
textual substitutions. First,  we will lose confluence.
The second problem is that some reductions may get stuck. To address
these problems the authors suggest to define reductions in such a way
that it takes into account the different levels and keep track of
arities of functions. This leads to a carefully engineered system
which is confluent and strongly normalizing, although not very
intuitive. We believe our framework is simpler.

Gabbay and Lengrand \cite{Gabbay:LFMTP07,Gabbay:InfComp09} propose a multi-level calculus for
meta-variables called the Lambda Context Calculus where variables are modeled
via nominals. They also define two different kinds of substitutions: one, a
capture-allowing substitution, i.e. a meta-variable of level $n$ is allowed to
capture names below $n$ and two, a captrue-avoiding substitution for variables
greater than $n$. This  inherently leads to difficulaties regarding
confluence. Our work has one uniform capture-avoiding substitution operation
leading to a more elegant calculus.

Finally, we mention the work by Geuvers and Jogjov (see for example
\cite{Geuvers:CSL02}). Open terms are
represented via a kind of meta-level Skolem function. However, in
general reduction and instantiation of meta-variables (or holes) do
not commute. This problem also arises in Bognar and de Vrijer
\cite{Bognar:JAR01}.  Our work resolves many of these aforementioned
problems, since reduction and instantiation naturally commute and require no
special treatment.

\paragraph{Functional multi-level staged computation}
The division of programs into two stages has been studied intensively in partial evaluation and staged functional computation. Davies and Pfenning
\cite{Davies:ACM01} proposed the use of the modal necessity operator to provide
a type-theoretic foundation for staged computation and more specifically,
run-time code generation. Abstractly, values of type $\Box A$ stand for an
(unevaluated) source expression. To avoid generating closures,
contextual types may be used to generate ``open'' (unevaluated) source
expressions \cite{Nanevski:ICML05}.  An open source expression would then have
type $A[\Psi]$ to describe code of type $A$ in a context $\Psi$. 

However, this two-level framework does not allow us to specify multi-level
transition points (e.g. “dynamic until stage $n$”). For example, Gl{\"u}ck and
J{\o}rgensen \cite{Gluck:Ershov96,Gluck:PLILP95} propose a multi-level
specialization to allow an accurate and fast multi-level binding-time analysis.
This means that a given program can be optimized with respect to some inputs at
an earlier stage, and others at later stages. Gl{\"u}ck and J{\o}rgensen
generalized two-level program generators into multi-level ones, called
multi-level generating extensions. By this generalization, a generated code
fragment can be used for different levels. Subsequently, Yuse and Igarashi
\cite{Yuse:PPDP06} proposed a logical foundation for multi-level generating
extensions based on linear time temporal logics. We believe that our
framework of multi-level contextual types may be used as an alternative to
generate open code which can be used at different levels and manage its
dependency on previous levels cleanly. 

%%% Local Variables: 
%%% mode: latex
%%% TeX-master: "paper"
%%% End: 

\section{Conclusion}
We generalized and extended the original contextual type theory of Nanevski
et al \cite{Nanevski:ICML05}, where we distinguish between meta-variables and
bound variables, to multiple levels. This streamlines the original presentation  
with fewer typing rules, syntax and operations but with many of the same
properties. Substitutions are defined, checked applied and manipulated in
exactly the same way as meta-substitutions, for instance. We believe our
framework provides already a suitable foundation for formalizing contexts in
theorem proving and functional programming. Unlike other attempts
to provide a multi-level calculus, we believe our work which is based on
contextual modal types avoids and simplifies many of the issues which arise such
as capture-avoiding substitution and the related issues of confluence.

%No mathematics beyond mere trivialities dispense with the use of
%variables, the enabling ingredient for a discourse mentioning unknown
%quantities. In the same way, reflecting upon syntax to lift the
%discourse to the meta-level requires introducing meta-variables, which
%contextual type theory and many other calculi and formal systems
%provide in a principled way. We iterated the process of adding a
%meta-level to whatever base level we already have, yielding a
%multi-level system up to $\omega$ based in this paper on contextual
%type theory. We have found that beyond the newly gained expressivity,
%this system offers a simple and elegant account that in many ways
%streamlines the original contextual type theory of Nanevski et al
%\cite{Nanevski:ICML05}, with fewer typing rules, syntax and operations
%but many of the same properties. Substitutions are defined, checked
%applied and manipulated in exactly the same way as meta-substitutions,
%for instance. 

While we have used a named calculus for expository
purposes, the system also generalizes nicely a calculus based on de
Bruijn indices and suspensions more typical of an implementation, such
as that of Abel and Pientka \cite{Abel:LFMTP10}. Variables are then
pairs of indexes into a substitution inside a stack of substitutions.

We have shown that it is possible to {\em express} meta$^k$-terms in
this generalized contextual type theory but have left largely
untouched the question of how to attach a computational behaviour to
such objects and {\em compute} with them. A first step towards
representing computations is to move to a non-canonical calculus that
permits arbitrary (typed) terms, to which we could attach arbitrary
rewrite rules as in deduction modulo \cite{dowek:modulo}. We could
then add meaningful recursors to some of the layers to write proofs by
induction, or add more computational effects for a layer acting as a
tactic layer for a programming and reasoning system such as Beluga. We
would obtain a uniform framework for all of representations of syntax,
proofs over these representations and tactics over these proofs. And
indeed, such a system would be useful for implementing Beluga within
Beluga itself.

%%% Local Variables: 
%%% mode: latex
%%% TeX-master: "paper"
%%% End: 

 \bibliographystyle{eptcs}
\bibliography{references}

\LONGVERSION{
\input{appendix}
}
\end{document}